\documentclass[12pt]{article}

\usepackage[a4paper, margin=1.25 in]{geometry}
\usepackage[latin1]{inputenc}
\usepackage{hyperref}
\usepackage{graphics}
\usepackage{threeparttable}
\usepackage{epsfig}
\usepackage{algorithm,algpseudocode}
\usepackage{amscd,amsfonts,amsopn,amssymb,amstext,amsmath}
\usepackage{appendix}
\usepackage{booktabs}
\usepackage{amssymb}
\usepackage{epstopdf}
\usepackage{natbib}
\usepackage{fullpage}
\usepackage{graphicx,graphics,psfrag}
\usepackage{latexsym,enumerate}
\usepackage{multirow}
\usepackage{natbib}
\usepackage{subcaption}
\usepackage{rotating}
\usepackage{setspace}
\usepackage[T1]{fontenc}
\usepackage{times}
\usepackage{url}
\usepackage{bm}
\usepackage{soul}

\begin{document}

\newtheorem{definition}{Definition}
\newtheorem{theorem}{Theorem}
\newtheorem{example}{Example}
\newtheorem{corollary}{Corollary}
\newtheorem{lemma}{Lemma}
\newtheorem{proposition}{Proposition}
\newtheorem{remark}{Remark}

\newenvironment{proof}{{\bf Proof:\ \ }}{\qed}
\newcommand{\qed}{\rule{0.5em}{1.5ex}}
\newcommand{\bfg}[1]{\mbox{\boldmath $#1$\unboldmath}}

\numberwithin{proposition}{section}
\numberwithin{equation}{section}
\numberwithin{theorem}{section}
\numberwithin{example}{section}
\numberwithin{definition}{section}
\numberwithin{lemma}{section}
\numberwithin{remark}{section}

\newcommand{\R}{\mathbb{R}}
\newcommand{\argmin}{\arg\!\min}
\newcommand{\diff}{\mathop{}\!\mathrm{d}}

\setstretch{1.5}   

\begin{center}

\section*{Beyond the Beta Lorenz Curve: A New Parametric Family for Poverty and Inequality Estimation}

\vskip 0.2in {\small \bf Jos\'e Mar\'{\i}a Sarabia$\,^a$\footnote{Corresponding
author. E-mail address: sarabiaj@unican.es (JM Sarabia), vanesa.jorda@unican.es (V. Jord\'a), emilio.gomez-deniz@ulpgc.es (E. G\'omez-D\'eniz).}, Vanesa Jord\'a$\,^a$, Emilio G\'omez-D\'eniz$\,^b$
\vskip 0.2in
{\small\it $\,^a$Department of Economics and SANFI,  University of Cantabria, Spain}\\
{\small\it $\,^b$Department of Quantitative Methods in Economics and TIDES Institute,\\
University of Las Palmas de Gran Canaria, Spain}\\[-0.2cm]
}

\end{center}

\begin{abstract}\noindent

The estimation of inequality and poverty measures is frequently constrained by a lack of individual data. When only income shares are available, the Beta Lorenz curve introduced by Kakwani (\textit{Econometrica}, 48, 1980) has become a standard tool for reconstructing income distributions. Together with the General Quadratic (GQ) Lorenz curve, it is used by the World Bank to produce official poverty  estimates when microdata are unavailable. In this paper, we show that Kakwani's specification does not generally satisfy the formal  requirements of a genuine Lorenz curve and introduce a new Lorenz curve rooted in the corrected parameter space. Using more than 1,700 datasets, we show that the proposed model yields valid Lorenz curves in all applications and consistently outperforms competing specifications in both accuracy and sampling precision. The GQ, by contrast, fails to provide genuine curves in 15 percent of datasets and underestimates extreme poverty severity in 98 percent of them.

\end{abstract}

\noindent {\bf Key Words}: Grouped data, parametric models, Gini index, poverty gap, Watts index

\section{Introduction}
The eradication of poverty and the reduction of inequality sit at the very heart of the 2030 Agenda for Sustainable Development and its global milestones. To monitor progress toward these goals, it is essential to measure poverty and inequality on a regular basis using high-quality data. To this end, several international initiatives have emerged to expand data availability and improve methodological harmonization across countries and over time. This is the role played by the Poverty and Inequality Platform (PIP) developed by the World Bank, which offers regularly updated indicators for cross-country comparisons and intertemporal analyses of the evolution of poverty and inequality. Complementary to this initiative, the World Inequality Database (WID) offers detailed distributional estimates through the systematic integration of available data from surveys, tax records, and national accounts.

These efforts, however, are often undermined by the limited availability of individual-level data. Even today, no microdata on individual incomes are available for several countries, including Algeria, China, Guyana, Turkmenistan, Trinidad and Tobago, Venezuela, and the United Arab Emirates, and data limitations are even more pronounced for earlier periods \citep{wb_handbook}.\footnote{In the case of China, although survey projects such as the China Household Income Project (CHIP) collect detailed household-level information, access to microdata is restricted and not systematically available for international research or long historical comparisons. Consequently, for most empirical applications, researchers must rely on grouped income shares (e.g., quintiles or deciles), which constitute the only consistently accessible source of distributional information.} When only grouped or tabulated data are available, parametric models are a valuable tool for reconstructing the underlying income distribution, from which poverty and inequality measures can be estimated \citep{jorda2021}.

In practice, distributional estimates in the WID are frequently constructed using the methods proposed by \citet{blanchet2022}, which rely on flexible models from the generalized Pareto family. Generalized Pareto models are an effective tool for characterizing distributions that exhibit a power-law tail \citep{arnold2008}. Consequently, they are an attractive candidate for modeling income distribution above the median. However, their application may be less effective for the lower half of the distribution. As a result, using this model to estimate poverty and inequality would likely lead to significant biases.

By contrast, the World Bank's official poverty estimates rely on parametric Lorenz curve specifications when individual-level data are unavailable. The estimation of their parameters is often simpler compared to more traditional statistical distributions, especially because these models only require the estimation of shape parameters. While several specifications exist,\footnote{Numerous models have been proposed to approximate the Lorenz curve \citep[see, e.g.,][]{kakwani1976, rasche1980, villasenor1989, basmann1990, ortega1991, chotikapanich1993, holm1993, arnold1987, ryu1996, sarabia1999, rohde2009}. For comprehensive reviews on the functional forms and properties of parametric Lorenz curves, we refer the reader to \citet{chotikapanich2008} and \citet{arnold2018}.} PIP relies on two primary models: the General Quadratic (GQ) Lorenz curve developed by \citet{villasenor1989} and the Beta Lorenz curve introduced by \citet{kakwani1980}. Although these models have been found to be particularly suitable for estimating poverty indicators \citep{minoiu2009, bresson2009}, they are prone to reporting inconsistent estimates. These inconsistencies are predominantly caused by the restrictive parameter constraints that characterize these models as valid Lorenz curves. When these conditions are violated, the model can produce non-increasing concave curves or even negative values, which do not meet the basic regularity conditions that a theoretically valid Lorenz curve should satisfy.\footnote{According to \citet{bresson2009}, these models do not meet the regularity conditions for a valid Lorenz curve in about half of the empirical income distributions analyzed.}

The Beta Lorenz curve has been particularly problematic in this regard. As we demonstrate in this paper, Kakwani's original proposal does not, in general, satisfy the defining properties of a genuine Lorenz curve and is theoretically valid only for a single configuration of its parameter space. This lack of theoretical robustness is particularly important given the institutional weight of this model within the World Bank's poverty estimates. This model also remains a popular choice for approximating income distributions in recent studies (see \citet{bresson2009, kobayashi2022, ravallion2022, wilson2022}). Consequently, a substantial body of empirical work relies on a functional form that may violate the mathematical requirements of a Lorenz curve. 

In this paper, we identify the unique case within Kakwani's family that remains theoretically consistent 
across the entire unit interval. Building on this corrected framework, we propose a new specification 
whose parameter space is fully characterized by simple boundary 
constraints that guarantee theoretical validity across a wide range of empirical 
income distributions. We evaluate the performance of our proposed specification through an 
extensive empirical assessment using more than 1,700 datasets. 
Our comparative analysis includes three 
alternative specifications rooted in Kakwani's parametric framework 
as well as the GQ Lorenz curve, given its institutional relevance.

Our results indicate that the proposed model satisfies 
the regularity conditions of a genuine Lorenz curve in every dataset 
analyzed, while the GQ model fails in approximately 15 percent of 
empirical applications. Moreover, this new Lorenz curve
consistently delivers the most accurate estimates 
of poverty and inequality measures across all specifications and 
datasets considered, with estimation errors that are not only lower on 
average but also exhibit substantially less variability across datasets 
than those of competing models. Most importantly, our findings reveal 
that the GQ Lorenz curve systematically underestimates the depth and severity of extreme 
poverty. Conversely, 
when higher poverty thresholds are considered, the GQ model tends to 
overestimate poverty incidence in 83 percent of the datasets analized. Finally, Monte Carlo simulations confirm that the 
superiority of $L_3$ in point-estimate accuracy is not achieved at the 
cost of greater sampling uncertainty: $L_3$ exhibits lower standard errors than all specifications 
considered for inequality measures, and sampling uncertainty 
broadly similar to that of GQ for poverty measures.

The remainder of this paper is organized as follows. Section 2 demonstrates the limitations of the Beta Lorenz curve in representing income distributions and introduces alternative models derived from specific cases of this functional form. Section 3 details the estimation methodology for these models. Section 4 evaluates the performance of the proposed specification 
against alternative parametric Lorenz curves. Finally, the paper concludes by discussing the main implications of this study and offering recommendations for the estimation of income and poverty measures from grouped data.

\section{Kakwani's Lorenz curve (1980) revised}\label{section_LC}
The Lorenz curve graphs the cumulative income share $L(p)$ as a function of the cumulative population share $p$ when the income units are arranged according to income size. A characterization of the Lorenz curve attributed to Gaffney and Anstis by \citet{pakes1981} is given by the following theorem.

\begin{theorem}\label{theorem1}
Suppose $L(p)$ is defined and continuous on $[0,1]$ with second derivative $L''(p)$. The function $L(p)$ is a Lorenz curve if and only if,
\begin{equation}\label{LCgenuine}
L(0) = 0,\;\;L(1) = 1,\;\;L'(0^+) \geq 0,\;\;L''(p) \geq 0, \quad p \in (0,1).
\end{equation}
\end{theorem}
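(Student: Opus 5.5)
The plan is to fix the definition of a Lorenz curve that the characterization refers to. Following Gastwirth, a Lorenz curve of a nonnegative random variable $X$ with distribution function $F$ and finite positive mean $\mu$ is
\[
L(p)=\frac{1}{\mu}\int_0^p F^{-1}(t)\,\diff t,\qquad p\in[0,1],
\]
where $F^{-1}(t)=\inf\{x:F(x)\ge t\}$ is the quantile function. Both implications will then be proved directly, with the quantile function acting as the bridge between $L$ and $F$.

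\emph{Necessity.} Assume $L$ is a Lorenz curve. Then $L(0)=0$ trivially, and $L(1)=\mu^{-1}\int_0^1F^{-1}(t)\,\diff t=\mu^{-1}E[X]=1$, using the standard fact that $F^{-1}(U)$ has distribution $F$ when $U$ is uniform on $(0,1)$. Since $F^{-1}$ is nondecreasing, $L$ is convex. Wherever $L'$ exists it equals $F^{-1}(p)/\mu$. Hence, under the assumed twice differentiability, $L''\ge 0$ on $(0,1)$. Moreover $L'(0^+)=F^{-1}(0^+)/\mu\ge 0$ because $X\ge 0$.

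\emph{Sufficiency.} Assume $L$ satisfies \eqref{LCgenuine}. Since $L''\ge0$, the derivative $L'$ is nondecreasing and nonnegative on $(0,1)$, because $L'(0^+)\ge0$. Define $Q(p)=L'(p)$ on $(0,1)$, take the left-continuous version if needed, and let $X=Q(U)$ with $U$ uniform on $(0,1)$. Then $X\ge0$. The fundamental theorem of calculus, combined with continuity of $L$ at the endpoints, gives
\[
E[X]=\int_0^1 L'(t)\,\diff t=L(1)-L(0)=1.
\]
As $Q$ is nondecreasing and left-continuous, it is the quantile function of $X$, so $F^{-1}=Q$ and $\mu=1$. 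Therefore $\mu^{-1}\int_0^pF^{-1}(t)\,\diff t=L(p)-L(0)=L(p)$, which shows that $L$ is the Lorenz curve of $X$.

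The main obstacle is the boundary behaviour. $L'$ may be unbounded near $1$, so integrating $L'$ over $[0,1]$ requires care: write $\int_\epsilon^{1-\delta}L'=L(1-\delta)-L(\epsilon)$ and pass to the limit using continuity of $L$ and monotone convergence, since $L'\ge0$. A related point is that the identity "nondecreasing left-continuous $Q$ equals the quantile function of $Q(U)$" is standard but must be invoked explicitly. If $L'$ is constant (so $L(p)=p$), $X$ is degenerate at $1$, which is still admissible and needs no separate treatment.
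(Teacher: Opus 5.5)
Your proof is correct. Note, though, that the paper gives no proof of this statement: it imports it as a known characterization, attributed to Gaffney and Anstis via Pakes (1981), and uses it only as a black box in the proof of Theorem 2.2. So your argument is not an alternative route to the paper's. It is a self-contained replacement for a citation.

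What it buys is transparency. By fixing Gastwirth's definition and working through the duality $L'=F^{-1}/\mu$, you make explicit the class of distributions the characterization refers to: nonnegative incomes with finite positive mean, which the paper leaves implicit. You also show what each condition encodes. $L(1)=1$ is the normalization by the mean, $L''\ge 0$ is the monotonicity of the quantile function, and $L'(0^+)\ge 0$ is the nonnegativity of incomes.

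Your argument also shows that the smoothness hypothesis is inessential. Take $Q$ to be the left derivative of a continuous convex $L$. The same sufficiency argument then goes through: absolute continuity on compact subintervals of $(0,1)$, monotone convergence at the endpoints, and the identification of a nondecreasing left-continuous $Q$ with the quantile function of $Q(U)$. This characterizes all Lorenz curves, with convexity in place of $L''\ge 0$, including the piecewise-linear curves of discrete distributions.

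Under the theorem's own hypotheses, $L'$ is differentiable, hence continuous, on $(0,1)$. So the ``left-continuous version'' is superfluous there.

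One line is worth adding. The proof of Theorem 2.2 reads $L'(0^+)$ as $\lim_{p\to0^+}L'(p)$. This coincides with the right derivative of $L$ at $0$, because $L(p)/p=L'(\xi_p)$ for some $\xi_p\in(0,p)$ and $L'$ is monotone. Your step ``$L'\ge 0$ on $(0,1)$'' is therefore valid under either reading.
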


Consider the functional form to approximate the Lorenz curve proposed by Kakwani (1980),
\begin{equation}\label{kakwaniLC}
L(p; a, \alpha, \beta) = p - ap^\alpha(1-p)^\beta,
\end{equation}
where
$$
a\ge 0,\;\;0<\alpha\le 1,\;\;0<\beta\le 1.
$$

The conditions required to ensure that (\ref{kakwaniLC}) constitutes a genuine Lorenz curve have been revisited by several authors. For instance, \citet{cheong2002empirical} suggests that Kakwani's specification is a valid Lorenz curve provided that $0 < \alpha, \beta < 1$. However, this result is incorrect. To provide a counterexample, if $a = 1$ and $\alpha = \beta = \frac{1}{2}$, then $L(p; a, \alpha, \beta) \leq 0$ for all $p \in [0, \frac{1}{2}]$, violating the fundamental requirement that a Lorenz curve must lie within the unit square and above the horizontal axis.

The left panel of Figure \ref{lc_example} presents the Beta Lorenz curve for this parameter setting. We observe that a large segment of the curve falls below the horizontal axis. Given that observed income shares are strictly non-negative by definition, one may argue that this is a purely theoretical or extreme scenario that would hardly occur in practice. However, negative Lorenz curves are a common phenomenon when this model is fitted to real-world data . Consider the case illustrated in the right panel of Figure \ref{lc_example}, which depicts the Beta Lorenz curve estimated for Belgian data in 2016. While looking at the whole curve may not reveal any irregularities, a closer inspection of the lower tail reveals a clear violation of the non-negativity constraint. The curve crosses the zero at $p=0.0126$ and remains negative for all values below this threshold.

\begin{figure}[tbhp]
\begin{center}
\includegraphics[scale=0.53]{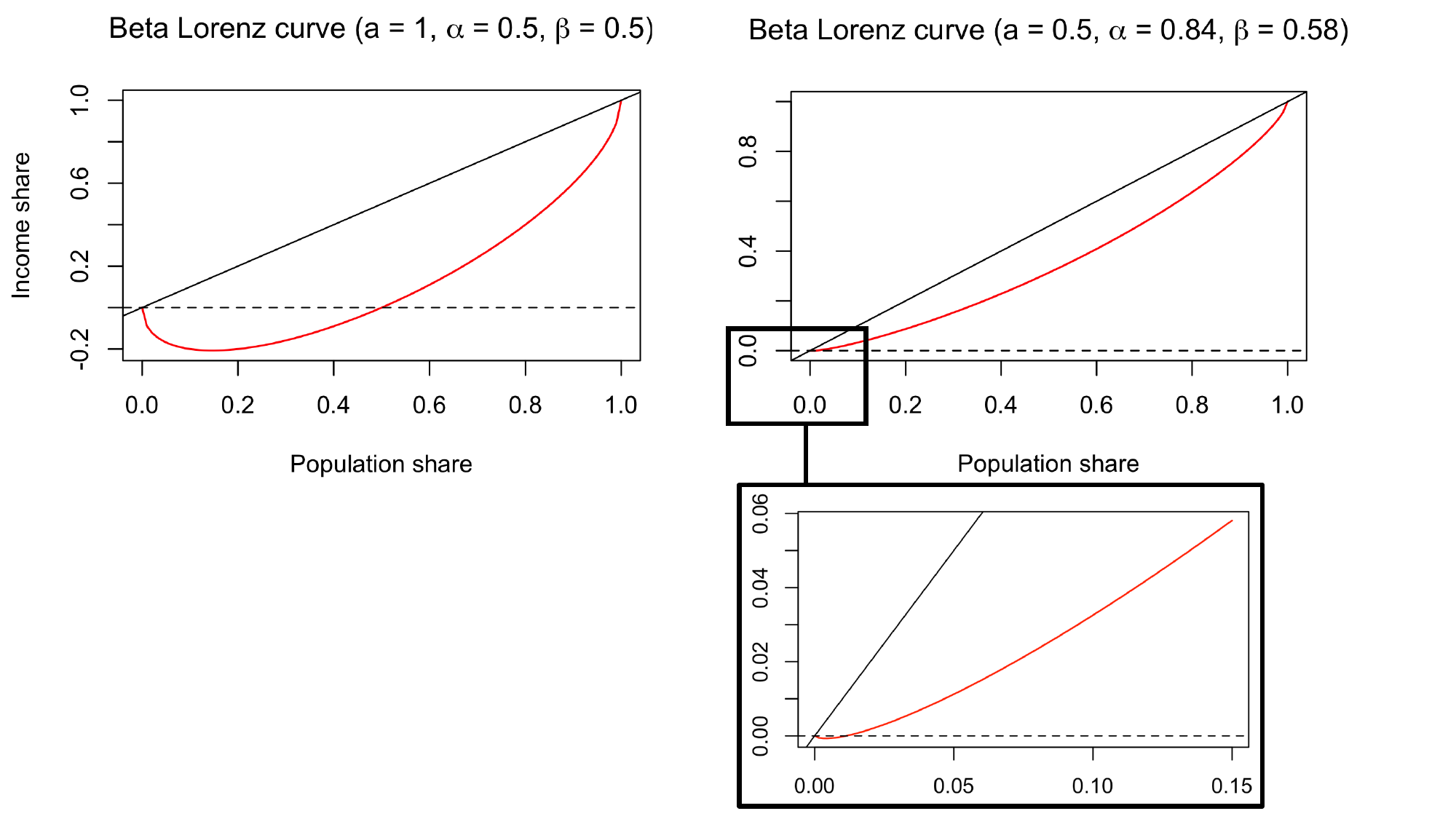}
\caption{Failure of the Beta Lorenz curve to satisfy theoretical requirements} \label{lc_example}
\end{center}
\end{figure}

The literature surrounding the Kakwani proposal contains conflicting results regarding the parameter space required for this model to characterize a mathematically valid Lorenz curve (see, e.g., \citet{schader1994fitting, cheong2002empirical, sarabia1999, sitthiyot2021simple, shen2024regression}). We resolve these inconsistencies by introducing a general theorem that establishes the necessary and sufficient conditions on the parameters $a, \alpha$ and $\beta$ for Eq. \eqref{kakwaniLC} to satisfy the fundamental properties of a Lorenz curve.

\begin{theorem}\label{theorem2}
Consider the Kakwani's Lorenz curve defined (\ref{kakwaniLC}). Then, (\ref{kakwaniLC}) is a genuine Lorenz curve only if
$\alpha = 1$, $0 \leq a\leq 1$ and $0<\beta \leq 1$, which leads to the following curve,
\begin{equation}\label{SpecialCase1}
L_0(p)=p-ap(1-p)^\beta,\;\;0\le p\le 1.
\end{equation}
\end{theorem}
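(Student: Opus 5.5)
We verify the four conditions of Theorem \ref{theorem1} directly on (\ref{kakwaniLC}), excluding the trivial case $a=0$ (where $L(p)=p$ is genuine and formally covered by the statement). The conditions $L(0)=0$ and $L(1)=1$ hold automatically, since $\alpha,\beta>0$ make $p^\alpha(1-p)^\beta$ vanish at both endpoints. The restrictions therefore come from $L'(0^+)\ge 0$ and $L''\ge 0$. For the ``only if'' direction we show that $\alpha<1$ is impossible, and then that $\alpha=1$ forces $a\le 1$. For completeness we also check that the resulting family (\ref{SpecialCase1}) is indeed genuine.

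\textbf{Step 1: $\alpha=1$ is necessary.} The derivative is
$$L'(p)=1-a\alpha p^{\alpha-1}(1-p)^\beta+a\beta p^\alpha(1-p)^{\beta-1}.$$
If $0<\alpha<1$ and $a>0$, the middle term tends to $-\infty$ as $p\to0^+$ and the last term tends to $0$, so $L'(0^+)=-\infty<0$. This violates Theorem \ref{theorem1}. Equivalently, $L(p)\sim -ap^\alpha<0$ near $0$, which matches the counterexample in the text. Hence $\alpha=1$.

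\textbf{Step 2: $a\le 1$ is necessary.} With $\alpha=1$ we have $L(p)=p-ap(1-p)^\beta$, so
$$L'(p)=1-a(1-p)^\beta+a\beta p(1-p)^{\beta-1},\qquad L'(0^+)=1-a.$$
The condition $L'(0^+)\ge0$ therefore gives $a\le 1$.

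\textbf{Step 3: sufficiency and convexity.} We compute the second derivative of $g(p)=p(1-p)^\beta$:
$$g''(p)=-2\beta(1-p)^{\beta-1}+\beta(\beta-1)p(1-p)^{\beta-2}=\beta(1-p)^{\beta-2}\bigl[-2(1-p)+(\beta-1)p\bigr].$$
The bracket equals $-2+(\beta+1)p$. Since $\beta+1\le 2$, it is at most $-2+2p<0$ on $(0,1)$, so $g''(p)<0$ there. Hence $L''(p)=-a\,g''(p)\ge0$ whenever $a\ge0$. Together with Steps 1 and 2, this shows that (\ref{SpecialCase1}) with $0\le a\le1$ and $0<\beta\le1$ satisfies (\ref{LCgenuine}).

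\textbf{Main obstacle.} The hard part is the convexity bracket in Step 3. It works because $\beta\le1$, and it fails for $\beta>1$ near $p=1$, where $-2+(\beta+1)p>0$. The necessity of $\alpha=1$ in Step 1 is also what separates this result from \citet{cheong2002empirical}. Note that a general-$\alpha$ convexity analysis is unnecessary, because Step 1 already excludes every $\alpha<1$.
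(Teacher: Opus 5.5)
Your proof is correct and follows the paper's own route. Both apply Theorem \ref{theorem1} through three facts: $L'(0^+)=-\infty$ when $\alpha<1$ and $a>0$, $L'(0^+)=1-a$ when $\alpha=1$, and the sign of $L''(p)=a\beta(1-p)^{\beta-2}\bigl(2-(1+\beta)p\bigr)$ for the resulting special case.

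The only difference is that the paper also rules out $\alpha>1$, using a case analysis of the quadratic $u(p)$. You may skip this because (\ref{kakwaniLC}) already imposes $\alpha\le 1$, and in any case it follows at once from $L''(p)\sim -a\alpha(\alpha-1)p^{\alpha-2}<0$ as $p\to0^+$. Your explicit handling of the degenerate case $a=0$ is, if anything, more careful than the paper's, which assumes $a>0$ without saying so.
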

The proof is provided in Appendix A. In the boundary cases, (\ref{SpecialCase1}) is a genuine Lorenz curve if $\beta=0$ and $a=0$; if $\beta=1$ and $0\le a\le 1$; or if $a=1$ and $0<\beta\le 1$.

The Gini index of $L_0(p)$ is given by,
\begin{eqnarray}\label{Gini_L0}
G(a, \beta) &=& 2 \int_0^1 \left[p - L_1(p)\right] dp\nonumber\\
&=& 2aB(2, \beta + 1)\nonumber\\
&=& \frac{2a}{(\beta + 1)(\beta + 2)},\label{GiniSpecialCase1}
\end{eqnarray}
where $B(a,b)$ denotes the beta function. The generalized Gini index and the cumulative distribution function (CDF) associated with  \ref{SpecialCase1} are included in Appendix A.

\section{Functional variants of the Beta Lorenz curve}\label{section2}
\subsection{Alternative Lorenz curve specifications based on Kakwani's parametric framework}\label{section2.1}
In this section, we consider alternative LCs that can be derived from the original Kakwani's proposal. Starting from the original Beta curve in \eqref{kakwaniLC}, we first set the parameters $a=1$ and $\alpha=1$. By factoring out the term $p$ and subsequently generalizing the model by introducing an exponent $a$ (which is conceptually distinct from coefficient $a$ in the original Kakwani specification), we arrive at the \citet{ortega1991} Lorenz curve:

\begin{equation}\label{lco}
L_1(p;a,b)=p^a\left[1-(1-p)^b\right],\;\;0\le p\le 1,
\end{equation}
with,
$$
a\geq 0,\;0<b\le 1.
$$

The Gini index of the curve given in (\ref{lco}) is,
\begin{equation}\label{Gini_L1}
G_{L_1}(a,b)=1-2 \left(\frac{1}{a+1}-\frac{\Gamma (a+1) \Gamma (b+1)}{\Gamma (a+b+2)}\right).
\end{equation}

An extension of (\ref{lco}) is given by the following curve introduced by \citet{sarabia1999}:
\begin{equation}\label{lc2}
L_2(p;a,b,d)=p^a\left[1-(1-p^d)^b\right],\;\;0\le p\le 1,
\end{equation}
with,
$$
a\geq 0,\;0<b\le 1,\;d\ge 1,
$$
whose associated Gini index is given by:
\begin{equation}\label{Gini_L2}
G_{L_2}(a,b,d)=1-2 \left(\frac{1}{a+1}-\frac{\Gamma (b+1) \Gamma \left(\frac{a+1}{d}\right)}{d \Gamma \left(\frac{a+b d+d+1}{d}\right)}\right).
\end{equation}

Since our goal is to obtain genuine and flexible Lorenz curves that yield accurate 
estimates across the entire income distribution, we propose a novel model 
inspired by the properties of upper truncated random variables. Let $X$ be a random 
income variable with CDF $F_X$ and Lorenz curve $L_X$. We define $X_s = \{X \mid 
X \leq s\}$ as the random variable resulting from an upper truncation at income 
level $s$. The Lorenz curve associated with $X_s$ is given by:
\begin{equation}\label{truncatedLC}
L^{(s)}(p) = \frac{L_X(F_X(s)p)}{L_X(F_X(s))}, \quad 0 \leq p \leq 1.
\end{equation}

The introduction of the truncation parameter $s$ has a natural economic 
interpretation that is particularly relevant for poverty measurement. Standard 
Lorenz curve specifications must fit the entire income distribution with a single 
set of shape parameters, creating a tension between accurately representing the 
upper tail, where income concentration is highest, and capturing the lower portion 
of the distribution where poverty is measured. In highly unequal societies, this 
tension is especially acute: the shape parameters are pulled toward accommodating 
the heavy upper tail at the cost of misrepresenting the bottom of the distribution. 
The truncation parameter $s \in (0, 1]$ resolves this tension by isolating the 
influence of the upper tail, freeing the remaining parameters to track the curvature 
of the lower and middle income groups with greater precision. Hence, $s$ 
can be interpreted as a relative income threshold above which the distribution is 
allowed to behave differently from the bulk of the population. 

If $F_X$ is replaced by a valid Lorenz curve, the structure of 
\eqref{truncatedLC} ensures that the resulting function is again a genuine Lorenz 
curve. Our proposed model is derived from \eqref{truncatedLC} by assuming a 
power-function CDF,  $F_X = x^d$ and Ortega's proposal as a base Lorenz curve. 
This choice for $F_X$ is adopted primarily for analytical 
tractability rather than as a substantive claim about the functional form 
of the underlying income distribution. The resulting Lorenz curve is given by:
\begin{equation}\label{lc3b}
L_3(p; a, b, d, s) = \frac{p^{ad} \left[ 1 - (1 - p s^d)^b \right]}
{1 - (1 - s^d)^b}, \quad 0 \leq p \leq 1,
\end{equation}
under the parameter constraints: $$ a \geq 0, \quad 0 < b \leq 1, \quad d \geq 1, \quad 0 < s \leq 1. $$

While $L_3$ is characterized by four parameters, we can only identify 
three independent combinations: 
$\alpha = ad$, $b$, and $\sigma = s^d$. This follows from the fact that  $a$ and $d$ enter $L_3$ 
exclusively through their product and $s$ and $d$ through $s^d$. 
Eq. \eqref{lc3b} admits an equivalent parametrization in terms 
of the identifiable combinations as follows:
\begin{equation}\label{lc3}
L_3(p; \alpha, b, \sigma) = \frac{p^{\alpha}
\left[1-(1-\sigma p)^b\right]}{1-(1-\sigma)^b},
\end{equation}
with $$\alpha > 0  \quad 0 < b \leq 1,  \quad 0 < \sigma \leq 1. $$

While $L_1$ is nested within $L_2$ and $L_3$, it is important to observe that $L_2$ is 
nested within $L_3$ only in the limiting case where $\sigma = 1$ and $d = 1$, 
illustrating that $L_3$ is a distinct generalized parametric model rooted in  
Kakwani's framework. 
 
The Gini index of (\ref{lc3}) is expressed as follows:
\begin{equation}\label{Gini_L3}
G_{L_3}(\alpha, b, \sigma)=1-\frac{2 \left(\, _2F_1\left(-b,\alpha+1;\alpha+2;\sigma\right)-1\right)}{(\alpha+1) \left(\left(1-\sigma\right)^b-1\right)},
\end{equation}
where $\,_2F_1(a,b,c;z)$ is the hypergeometric function.

\subsection{Estimation methods}\label{methods}
In this section, we define the estimation strategy for the models presented in Section \ref{section2.1} from grouped data in the from of income shares. Let \textbf{x} be an $i.i.d.$ random sample of size $N$ from a continuous income distribution $f(x;\boldsymbol\lambda)$ defined over the support $H = [0, \infty)$, with $\boldsymbol\lambda' = (\boldsymbol\theta', \boldsymbol\beta')' \in \boldsymbol\Lambda \subseteq \R ^{k+l},$ where $\boldsymbol\Lambda$ is the parameter space, $\boldsymbol\theta$ is the vector of shape parameters and $\boldsymbol\beta$ is the vector of scale parameters. Assume that $H$ is divided into $J$ mutually exclusive intervals $H_j = [h_{j-1}, h_j) , j= 1, \dots, J$. Denote  $c_j= \sum_{i=1}^N\bm 1_{[h_{j-1}, h_j)}(x_i)x_i/\sum_{i=1}^Nx_i, j=1,\dots, J$ as the proportion of total income held by individuals in the $j^{th}$ interval and the cumulative proportion by $s_j= \sum_{m=1}^j c_m$. Let $p_j = \sum_{i=1}^N\bm 1_{[h_{j-1}, h_j)}(x_i)/N , j=1,\dots, J$ denote the frequency of the sample \textbf{x} in the $j^{th}$ interval and $u_j= \sum_{m=1}^j p_m$ the cumulative frequency. According to this scheme, income shares ($s_j, j=1, \dots, J$) are ordinates of the Lorenz curve corresponding to the abscissae $u_j, j=1, \dots, J$.

In the most comprehensive secondary databases reporting distributional statistics, the sampling design typically dictates that the proportion of observations within each group is predetermined. Consequently, the population shares ($p_j$) are treated as fixed constants, whereas the income shares ($s_j$) are the realized random variables. Under this specific data structure, the maximum likelihood estimation (MLE) framework based on a multinomial specification, as proposed by \citep{mcdonald1984}, is misspecified, as group frequencies are non-stochastic by construction. Furthermore, implementing this MLE approach requires information on the income group boundaries ($h_j$), which are rarely included from secondary sources. Given these data constraints, researchers have traditionally relied on minimum distance estimators to identify the vector of parameters of interest.

These estimators minimize the distance between the observed income shares and the theoretical functional form of the Lorenz curve. Let $X$ be a random variable defined on the support $\mathcal{Z}$, with a cumulative distribution function $F(x; \boldsymbol{\lambda})$ where $\boldsymbol{\lambda} \in \boldsymbol{\Lambda}$. The corresponding Lorenz curve, denoted as $L(u; \boldsymbol{\theta})$, is a function solely of the shape parameters,  $\boldsymbol{\theta} \subseteq \boldsymbol{\lambda}$, which can be estimated as follows:

\begin{equation}\label{nls1}
\hat{\boldsymbol\theta}= \argmin_{\boldsymbol\theta} \bm{M}(\boldsymbol\theta)'\bm{M}(\boldsymbol\theta).
\end{equation}%
$\bm{M}(\boldsymbol\theta)'= [m_1(\boldsymbol\theta), \dots, m_{J-1}(\boldsymbol\theta)]$ is the vector of moment conditions, which takes the form
$$
\bm M(\boldsymbol\theta) = L(\bm u; \boldsymbol\theta) - \bm s,%
$$
where $\bm s' = (s_1, \dots, s_{J-1})$ is a vector of cumulative income shares associated with the population proportions $\bm u' =(u_1, \dots, u_{J-1})$.

This approach is referred to as the Equally Weighted Minimum Distance (EWMD) estimator. Because the Lorenz curve is scale-invariant, the specification in (\ref{nls1}) only permits the identification of the subset of $\boldsymbol{\lambda}$ corresponding to the shape parameters $\boldsymbol{\theta}$. Notably, scale parameters are not needed for the estimation of relative inequality measures, such as the Gini coefficient. Consequently, when the research objective is centered on relative inequality, this estimation strategy circumvents the requirement for mean income data. Conversely, if the objective involves the estimation of poverty measures or absolute welfare, the scale of the distribution can be recovered by incorporating an external estimate of the mean income.

Although EWMD estimators yield consistent estimates of $\boldsymbol{\theta}$, they are generally asymptotically inefficient. This estimator fails to account for two critical features of the data structure. First, the elements of the moment vector $\mathbf{M}(\boldsymbol{\theta})$ may exhibit heteroskedasticity. Second, these moments are inherently correlated because the sum of income shares is, by definition, constrained to one. Consequently, any welfare indicators or indices derived from these parameters, including poverty and inequality measures, will also inherit this loss of efficiency, resulting in wider confidence intervals and reduced statistical power compared to an optimal weighting scheme.

Since our main objective is to provide point estimates for poverty and inequality measures, the EWMD approach remains a tractable and consistent choice. However, if the research objective extends to statistical inference, it should be noted that deriving valid confidence intervals requires data on the sample size, which is rarely available in secondary databases. For those exercises specifically interested in constructing reliable confidence intervals and minimizing the variance of the estimates, the Optimal Minimum Distance estimator is required. See \cite{Hajargasht2016} and \cite{jorda2021} for details. \footnote{For asymptotic efficiency, the inverse of the variance--covariance matrix of the moment conditions should be incorporated into Eq.~(\ref{nls1}) as the optimal weighting matrix. However, when Lorenz curves lack closed-form expressions for their underlying CDFs, implementing this optimal estimator becomes computationally complex and numerically unstable.} 

To implement the EWMD estimator, we use a constrained non-linear least squares 
framework via the PORT optimization algorithm. Although the 
theoretical regularity conditions for a genuine Lorenz curve impose upper bound 
restrictions on certain parameters, we enforce only lower bound constraints during 
optimization, restricting all parameters to be non-negative. The regularity 
conditions of the fitted curve are then evaluated ex post: a model is classified 
as yielding a genuine Lorenz curve if and only if the estimated parameters satisfy 
the conditions established in Sections \ref{section_LC} and \ref{section2.1}. This 
approach minimizes the risk of converging to corner solutions that are technically 
admissible but yield substantially larger estimation errors for poverty and 
inequality measures. By evaluating validity after estimation rather than imposing 
it as an optimization constraint, we ensure that reported regularity violations 
reflect genuine failures of the functional form to represent the underlying 
distribution, rather than artifacts of the optimization procedure.


\section{Evaluating parametric Lorenz curves for poverty and inequality measurement}\label{results}
In this section, we evaluate the empirical performance of alternative 
Lorenz curve specifications in estimating poverty and inequality measures 
from grouped income shares. Specifically, our comparative analysis includes 
the Kakwani special case (\ref{SpecialCase1}), Ortega's specification 
(\ref{lco}), the Sarabia-Castillo-Slottje curve 
(\ref{lc2}), hereafter the SCS curve, and our proposed $L_3$ curve 
(\ref{lc3}), all of which are rooted in Kakwani's parametric framework. 
As a benchmark, we compare these specifications with the GQ Lorenz curve, 
which serves as the primary parametric model used by the World Bank's PIP
 to estimate welfare indicators when microdata are unavailable.

When evaluating model performance, a common approach is to compare the residual sum of squares (RSS). However, this criterion is not appropriate for comparing all specifications considered here because they differ in the number of parameters.\footnote{Nevertheless, RSS remains informative when comparing models with the same number of parameters. Table~\ref{rss} in Appendix B reports the distribution of the RSS for the two-parameter models (Ortega and Kakwani) and the three-parameter specifications (SCS, GQ, and $L_3$). These statistics are computed across the 1,735 datasets from PIP and LIS for which all five Lorenz curve specifications yield genuine estimates. Among the three-parameter models, the proposed $L_3$ specification consistently provides the best fit to the observed income shares. It achieves not only the lowest average RSS but also the smallest value at every reported quantile of the distribution.} More importantly, a model may reproduce the observed Lorenz curve ordinates very closely while still failing to accurately recover other distributional characteristics. Consequently, we evaluate model performance by comparing parametric  estimates of poverty and inequality measures against their empirical counterparts. Although our objective is to assess the performance of these models in settings where only grouped income shares are available, we conduct the evaluation using datasets containing full income microdata. These data allow us to obtain the empirical benchmark values of mean income, the income shares used for estimation, and a wide range of poverty and inequality measures.

Our empirical evaluation proceeds in two steps. First, we estimate the parametric Lorenz curves using only the ten income shares, thereby simulating a scenario where researchers are restricted to grouped data. Second, we use these estimated parametric models to analytically or numerically derive the corresponding poverty and inequality indicators. To assess the performance of the competing specifications,  we calculate the estimation error by comparing these parametric predictions directly against their true empirical counterparts.

Formally, let $\mathbf{x}_d=(x_{1d},\dots,x_{N_d})$ denote the income observations for dataset $d$, consisting of an i.i.d. sample of size $N_d$ drawn from an underlying continuous income distribution. Let $\eta_d$ denote the population value of a given poverty or inequality measure associated with the underlying distribution of dataset $d$. Suppose that only the sample mean, $\bar{x}_d$, and the grouped income shares, $\mathbf{s}_d$, are observed, following the structure described in Section~\ref{methods}. For each dataset, the estimation error of a given measure is defined as:
$$
\varepsilon_d=\hat{g}(\mathbf{s}_d,\bar{x}_d)-\eta_d,
$$
where $\hat{g}(\mathbf{s}_d,\bar{x}_d)$ denotes the estimate obtained from the fitted parametric Lorenz curve using only grouped information. The relative version of this statistic is defined as $\varepsilon_d/\eta_d$, which expresses the estimation error as a proportion of the empirical benchmark and facilitates comparisons across measures with different scales. Since the population parameter $\eta_d$ is unobservable in practice, we approximate it using its empirical counterpart computed from the underlying microdata. Under standard regularity conditions, this empirical error converges asymptotically to the population error as the sample size increases.

\subsection{Data sources and sample selection}
Our analysis relies on two complementary data sources. First, we use the Luxembourg 
Income Study (LIS) Database, the largest repository of harmonized income microdata. 
LIS contains more than one thousand datasets covering 52 countries across Europe, 
North America, Latin America, Africa, Asia, and Australasia over the last five 
decades. However, its coverage is concentrated predominantly in middle- and high-income 
economies. To evaluate the performance of alternative Lorenz curve specifications 
in estimating inequality and poverty in low-income countries, we complement 
the LIS data with records from the World Bank's PIP.

The PIP database contains 2,584 datasets whose welfare estimates are derived from 
different methodologies, including direct access to household microdata, synthetic 
distributions, grouped data, and internal imputations. To avoid confounding our 
analysis with approximation errors introduced by alternative parametric or imputation 
procedures, we restrict the sample to datasets constructed directly from individual-level 
microdata. This filtering process yields 2,374 PIP datasets. For these observations, 
the World Bank maintains access to the underlying survey data, ensuring that the 
published estimates can be treated as reliable empirical benchmarks.

A subset of these PIP observations is constructed using LIS data.\footnote{These 
datasets correspond to Australia, Canada, Germany, Israel, Japan, South Korea, 
Taiwan (China), the United Kingdom, the United States, and high-income European 
Union countries for years preceding EU-SILC.} To avoid duplicating underlying 
survey information, we exclude from PIP those datasets explicitly derived from 
LIS microdata. This exclusion ensures that our empirical evaluation is based entirely 
on a pool of mutually exclusive and independent empirical distributions, avoiding 
any artificial sample inflation that would bias our aggregate performance metrics 
toward the structural characteristics of those specific economies. After applying 
this restriction, the final analytical sample comprises 1,974 PIP datasets.

Beyond mean and median income, the PIP platform provides information on inequality 
and poverty statistics. Regarding inequality, we extract the Gini index, which 
is highly responsive to transfers around the center of the income distribution, 
and the Mean Log Deviation (MLD), which is more sensitive to redistributions 
affecting the lower tail. A more extensive catalog of poverty measures is available 
within the PIP database. In particular, this platform provides estimates of three 
indices from the Foster--Greer--Thorbecke class computed against a fixed 
poverty line of \$3.00 per day:the headcount ratio, the poverty gap, and the poverty severity index. 
\footnote{The extreme poverty line of \$3.00 per day 
is provided by default, though the platform allows users to define any fixed threshold.} 
Estimates of the Watts index are also available, which is widely regarded as the first 
distribution-sensitive poverty measure. Finally, the platform provides data on 
the societal poverty line (SPL) and its corresponding headcount ratio. Introduced 
by \citet{worldbank2018} in response to Recommendation 16 of the Atkinson Commission 
on Global Poverty \citep{worldbank2017}, the SPL combines absolute and relative 
elements of poverty. It is formally defined as:
\begin{equation}\label{spl}
z^{sp} = \max \left\{ 3.00, \, 1.30 + 0.5 \tilde{m} \right\}
\end{equation}
where $\tilde{m}$ denotes the median daily income in 2021 PPP US dollars.

While PIP provides pre-calculated indicators, the LIS database is accessed via 
a remote execution system that allows us to compute any desired statistic directly 
from the underlying microdata. To ensure consistency and comparability with the 
PIP sample, we compute both the Gini index and the MLD. We supplement this 
inequality analysis with two additional members of the Generalized Entropy (GE) 
family: the Theil index ($GE(1)$), which is equally sensitive to transfers 
at all income levels, and the $GE(1.5)$ index, which is more sensitive to redistributions 
among higher-income groups. Furthermore, we explore the capability of our proposed 
models to estimate the Atkinson index using inequality aversion parameters of 
$\epsilon = 1$ and $\epsilon = 1.5$. As $\epsilon$ increases, the 
index places progressively greater weight on the economic status of 
the poorest members of society. For poverty analysis within 
the LIS sample, given its focus on middle- and high-income countries, we compute the 
headcount ratio associated with the societal poverty line defined in Eq.~\eqref{spl}.

\begin{table}[tbp]
\caption{\label{gen_lc} Number of datasets violating the regularity conditions of the Lorenz curve}
\begin{center}
\begin{tabular}{l c c c c c c}
\toprule
Model & Ortega & Kakwani & $SCS$ & $L_3$ & GQ & No. datasets \\
\midrule
PIP	&	0	&	10	&	590	&	0	&	305	&	1,974	\\
LIS	&	0	&	7	&	322	&	0	&	109	&	1,006	\\
\bottomrule
\end{tabular}
\end{center}
\begin{tablenotes}
\item \footnotesize{Source: authors' compilation.}
\end{tablenotes}
\end{table}

The five alternative models are estimated for each of the 1,006 LIS datasets 
and 1,974 PIP datasets. Table~\ref{gen_lc} reports the number of datasets for 
which the estimated parameters fail to satisfy the regularity conditions required 
to constitute a genuine Lorenz curve. The two-parameter 
specifications (Ortega and Kakwani) generate valid Lorenz curves in virtually 
all datasets, as their simpler parametric structure imposes fewer demands on the 
data. Among the three-parameter specifications, our proposed $L_3$ curve 
satisfies the regularity conditions of a genuine Lorenz curve in every dataset 
analyzed, without a single violation. By contrast, the SCS yields non-genuine 
curves in approximately one third of the observations, and the GQ 
specification violates the regularity conpercentditions in 15 percent of PIP datasets and 
11 percent of LIS datasets. 

The perfect regularity of $L_3$ is particularly noteworthy for two 
reasons. First, among all three-parameter specifications, $L_3$ 
simultaneously achieves the best fit to the observed income shares 
(Table~\ref{rss}) and the simplest regularity conditions. Second, this result is not 
achieved by sacrificing parametric flexibility, as $L_3$ has the same 
number of effective parameters as SCS and GQ. Taken together, these 
properties suggest that $L_3$ may resolve a fundamental 
tension in the existing literature between model flexibility, empirical 
fit, and theoretical validity.

To ensure comparability across model specifications, the subsequent analysis is conducted on 
the balanced subsample of datasets for which valid estimates are jointly obtained 
for all five Lorenz curves. After applying this restriction, the final analytical 
sample consists of 1,735 datasets, including 1,146 observations from PIP and 589 from LIS.

Restricting the analysis to datasets for which all five specifications yield genuine 
Lorenz curves may introduce a selection effect if the probability of obtaining valid 
estimates is systematically associated with the characteristics of the underlying income 
distribution. To assess this possibility, we compare the empirical poverty and 
inequality indicators of the excluded datasets with those that consistently yield 
genuine curves. Tables \ref{selection_pip} and \ref{selection_lis} in Appendix B 
report the average empirical values of these indicators for both the retained and 
excluded datasets, alongside the 95 percent confidence intervals for the differences in means.

For the LIS sample, the comparison reveals only limited differences between the two 
groups. Median income, the 
societal poverty rate, and most inequality indicators do not differ 
significantly. Excluded datasets exhibit only modestly higher values 
for the $GE(1.5)$ index and slightly lower average income. In the PIP 
sample, the datasets included in the analysis tend to display slightly 
lower mean and median incomes and somewhat higher MLD indices and 
societal poverty rates. By contrast, differences in extreme poverty 
measures are generally not statistically significant. Ultimately, even 
where statistically significant differences are observed, the point 
estimates of these discrepancies are relatively small, generally 
accounting for less than 10 percent of the baseline values.

Overall, these results indicate that the balanced subsample restriction 
introduces at most modest differences in the composition of the 
analytical sample, none of which are large enough to materially affect 
the interpretation of our findings. As a robustness check, we repeat 
the evaluation using model-specific samples. That is, instead of 
restricting the baseline analysis to the balanced subsample, each model 
is evaluated using the largest available set of datasets for which that 
specific functional form satisfies the Lorenz regularity conditions. 
These results, presented in Appendix B (Tables~\ref{LIS_app1} and 
\ref{poverty_app1}), fully confirm the main conclusions of the baseline 
analysis presented in the following sections.

\subsection{Estimation of poverty measures from grouped data}
We begin the empirical evaluation by examining the performance of the alternative Lorenz curve specifications in estimating the SPR, the only poverty measure available in PIP with a sound theoretical basis for application to high-income countries, such as those predominantly featured in the LIS database. This allows us to assess model performance over the broadest possible sample, comprising 
1,735 datasets.\footnote{Results estimated separately for LIS and PIP samples 
are reported in Appendix B, Tables \ref{spr_app1} and \ref{spr_app2}, respectively.} 
Table \ref{spr} reports the mean absolute estimation error across datasets, 
computed as $|\varepsilon_d|$, to avoid positive and negative deviations offsetting 
one another. To facilitate interpretation and comparability across datasets, we 
also report the relative error expressed as a percentage of the empirical benchmark ($|\varepsilon_d / \eta_d|$).

Our results suggest that, on average, the $L_3$ curve provides the most accurate estimates of the SPR, with a mean relative error of only 2.65 percent. All alternative specifications exhibit average errors above 3 percent. Ortega's curve and the SCS deliver very similar levels of precision, with estimated poverty rates deviating from their empirical counterparts by approximately 3.5 percent on average. The GQ specification produces larger errors, close to 5 percent, while Kakwani's special case exhibits the weakest performance among the models considered.

Beyond the average performance documented in Table~\ref{spr}, it is 
important to examine the full distribution of estimation errors across 
datasets. Average errors may conceal substantial heterogeneity in model performance and provide little information about whether a specification exhibits systematic bias in one direction. Figure \ref{violin_spr} presents violin plots of the error distributions for the four best-performing models in the estimation of the SPR. These plots combine kernel density estimates with embedded boxplots, allowing for an immediate visual comparison of the location, dispersion, and asymmetry of estimation errors across specifications.

\begin{table}[h]
\caption{\label{spr}
Average absolute error in the estimation of the societal poverty rate}
\centering
\begin{tabular}{lccccc}
\toprule
 & Ortega & Kakwani & SCS & $L_3$ & GQ \\
\midrule
Average error $(\overline{|\varepsilon|}$)	&	0.0074	&	0.0174	&	0.0075	&	0.0051	&	0.0079	\\
Relative error (\%)	&	(3.46)	&	(10.05)	&	(3.48)	&	(2.65)	&	(4.98)	\\
\bottomrule
\end{tabular}
\begin{tablenotes}
\footnotesize
\item \footnotesize{These figures are computed across 1,735 datasets, comprising 1,146 from the PIP database and 589 from the LIS database. The SPR is computed via numerical integration.}
\item Source: Authors' calculations.
\end{tablenotes}
\end{table}

\begin{figure}[tbhp]
\centering
\includegraphics[width=\columnwidth]{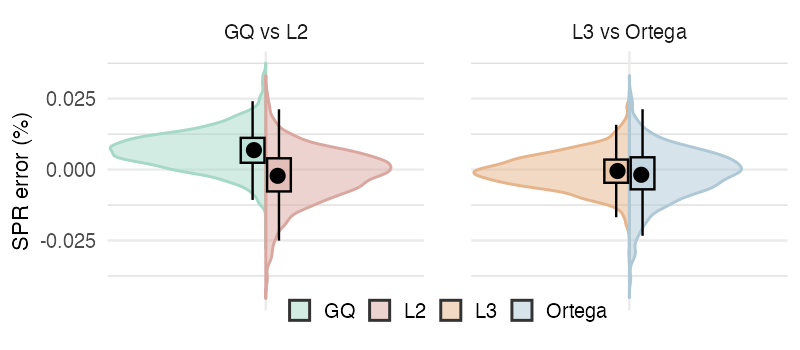}
\caption{Distribution of the error in the estimation of the SPR}
\label{violin_spr}
\end{figure}

These graphs suggest that the $L_3$ specification not only achieves the lowest average estimation error but also exhibits the narrowest error distribution. Therefore, this model exhibits greater precision and more stable performance across datasets. Ortega's specification and the SCS display broadly similar patterns, with slightly greater dispersion than $L_3$ and somewhat heavier lower tails. The GQ model shows a clear tendency to overestimate the societal poverty rate. Indeed, the GQ model overestimates the SPR in roughly 83 percent of the datasets analyzed.

The superiority of $L_3$ in estimating the SPR is robust to the sample 
restriction imposed by the balanced subsample. When each model is evaluated 
on its own largest valid sample, the $L_3$ model continues to deliver the lowest average estimation error in both the LIS 
and PIP databases (see Tables~\ref{LIS_app1} and~\ref{poverty_app1}). Moreover, 
the systematic tendency of the GQ model to overestimate the SPR documented above 
persists across the full samples. This provides additional evidence that this pattern is not sensitive to 
the particular datasets included in the balanced subsample.

We now turn to the estimation of alternative poverty statistics for the extreme poverty line, defined by the World Bank at \$3 per day. Although these poverty measures are available for all countries included in PIP, this threshold is primarily relevant for low-income settings. In middle- and high-income countries, poverty rates evaluated at this line are typically equal or very close to zero. When the empirical values of poverty measures approach zero, relative estimation errors become unstable and difficult to interpret, as even small absolute deviations may translate into disproportionately large percentage errors. Consequently, relative errors cease to provide informative comparisons of model performance across datasets. To provide a more meaningful assessment of the ability of alternative Lorenz curve specifications to estimate extreme poverty, we restrict the PIP sample to datasets for which the empirical poverty headcount ratio is at least 2 percent. This restriction yields a final sample of 655 datasets.\footnote{Summary statistics, including the mean absolute error and selected quantiles of the error distribution for the full PIP sample, are reported in Appendix B, Table \ref{poverty_app2}. These results indicate that the conclusions presented in this section are robust to the inclusion of datasets with very low poverty incidence and are not driven by the sample restriction imposed to improve the interpretability of relative estimation errors.}

\begin{table}[tbp]
\caption{\label{poverty} 
Average absolute error in the estimation of poverty measures using five alternative Lorenz curve specifications} 
\centering
\begin{tabular}{l l ccccc}
\toprule
 & & Ortega & Kakwani & SCS & $L_3$ & GQ \\
\midrule

Headcount	&	$\overline{|\varepsilon|}$	&	0.0103	&	0.0176	&	0.0103	&	0.0045	&	0.0075	\\
ratio	&	(\%)	&	(14.25)	&	(30.91)	&	(14.44)	&	(5.74)	&	(13.10)	\\
Poverty 	&	$\overline{|\varepsilon|}$	&	0.0063	&	0.0079	&	0.0064	&	0.0024	&	0.0037	\\
gap	&	(\%)	&	(33.04)	&	(42.32)	&	(34.34)	&	(14.26)	&	(25.79)	\\
Severity	&	$\overline{|\varepsilon|}$	&	0.0057	&	0.0061	&	0.0059	&	0.0024	&	0.0035	\\
index	&	(\%)	&	(61.64)	&	(52.41)	&	(64.82)	&	(27.84)	&	(37.73)	\\
Watts	&	$\overline{|\varepsilon|}$	&	0.0163	&	0.0160	&	0.0170	&	0.0068	&	0.0094	\\
index	&	(\%)	&	(45.95)	&	(46.52)	&	(48.45)	&	(20.37)	&	(31.28)	\\
\bottomrule
\end{tabular}
\begin{tablenotes}
\footnotesize
\item Notes: These figures are computed using 655 PIP datasets for which all five Lorenz curve specifications yield genuine estimates and whose empirical poverty headcount ratio exceeds 2 percent. The corresponding poverty estimates are obtained through numerical integration. 
\item Source: Authors' calculations.
\end{tablenotes}
\end{table}

Table \ref{poverty} reports the mean absolute estimation error for the four poverty measures available in PIP, together with the mean absolute relative error (reported in parentheses), which quantifies the average deviation from the empirical benchmark as a percentage. The results indicate that the $L_3$ curve consistently provides the most accurate estimates across all poverty measures considered. For the poverty headcount ratio, the average relative error is approximately 6 percent, substantially lower than that of the competing specifications. The SCS and Ortega's curve exhibit very similar performance, with relative errors close to 14 percent. 

Unlike the results obtained for the SPR, the GQ specification emerges here as the second best-performing model. Since the societal poverty line lies well above the \$3 threshold in most datasets,\footnote{The societal poverty line coincides with the extreme poverty threshold in only 5 datasets from LIS and 80 datasets from PIP.} this pattern suggests that the GQ specification provides, on average, a more accurate representation of the lower tail of the income distribution than Ortega specification and the SCS, which appear to capture the centre of the distribution more effectively.

Although the ranking of model performance remains broadly unchanged across poverty measures, the magnitude of the relative error increases markedly when moving from the headcount ratio to the poverty gap, poverty severity, and the Watts index. This pattern is expected given the construction of these indices. While the headcount ratio depends only on the proportion of individuals below the poverty line, the remaining measures incorporate information on the depth and distribution of poverty among the poor. Consequently, small approximation errors in the lower tail of the estimated Lorenz curve tend to accumulate and become amplified in these more distribution-sensitive poverty measures.

Despite the larger errors observed for these measures, the $L_3$ curve consistently 
outperforms the alternative specifications. Importantly, this result is not a 
statistical artifact driven by sample selection bias. When each model is evaluated 
on its own largest valid sample, $L_3$ continues to outperform GQ on every poverty measure considered (see Table 
\ref{poverty_app1}). For instance, the average absolute error in the poverty 
headcount ratio is 0.0040 for $L_3$ versus 0.0065 for GQ across all 1,974 PIP 
datasets. This pattern holds across all poverty measures, which suggests that the superiority of $L_3$ reflects genuine 
approximation accuracy rather than a favorable sample restriction.

Beyond comparing average performance, it is also informative to examine the full distribution of estimation errors across datasets. Figure \ref{violin_poverty} presents violin plots of the estimation errors for the poverty headcount ratio, poverty gap, and poverty severity index for the four best-performing specifications. For the headcount ratio, the error distributions appear relatively symmetric across models and are centered close to zero, providing little evidence of systematic overestimation or underestimation. Among the competing specifications, the $L_3$ curve exhibits the narrowest distribution of errors. The GQ model follows closely, while Ortega specification and the SCS display slightly greater dispersion and very similar variability.

\begin{figure}[hp]
\centering
\includegraphics[width=\columnwidth]{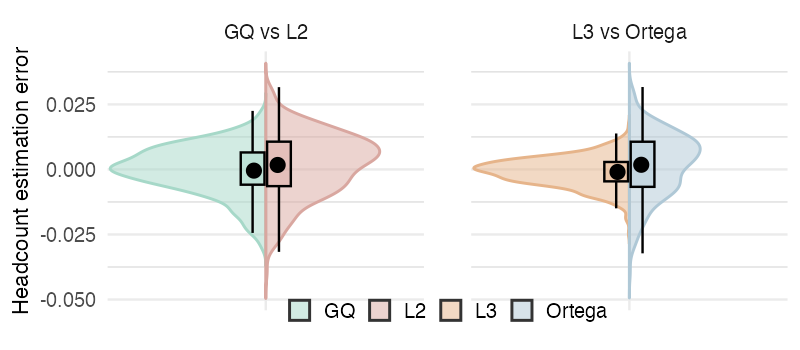}

\vspace{0.1cm}

\includegraphics[width=\columnwidth]{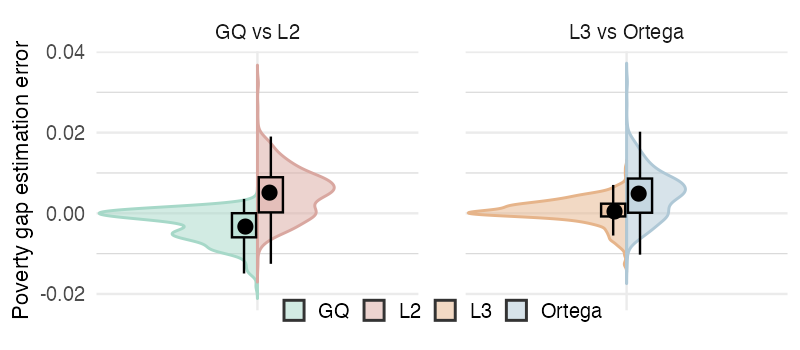}
\vspace{0.1cm}
\includegraphics[width=\columnwidth]{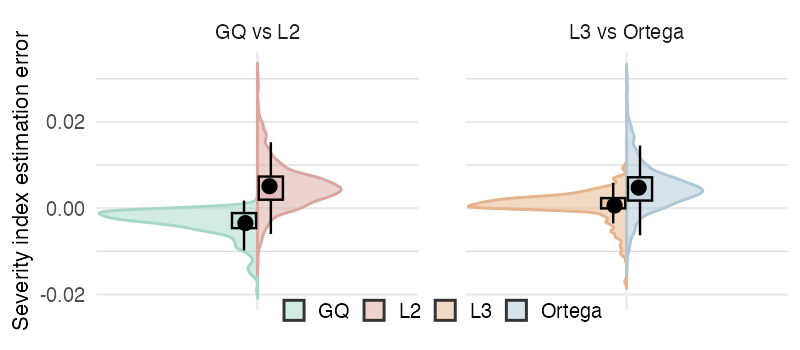}
\caption{\label{violin_poverty}
Distribution of estimation errors for poverty measures.}
\end{figure}

For the remaining poverty measures, the GQ specification exhibits a clear tendency to underestimate poverty.\footnote{This pattern is also observed for the Watts index (see Appendix B, Figure \ref{violin_watts}).} In particular, it underestimates the poverty gap in approximately 75 percent of the datasets. The downward bias is even more pronounced for the poverty severity and Watts indices, for which the GQ model produces estimates below their empirical counterparts in approximately 98 percent of the datasets. This result suggests that, although the GQ model provides a relatively accurate approximation of the incidence of poverty, it tends to underestimate the depth and intensity of deprivation among poor individuals. In other words, the lower tail of the distribution appears insufficiently represented once the poverty measures become sensitive not only to the proportion of poor individuals but also to their distance from the poverty line. This pattern  of GQ underestimation persists when each model is evaluated on its own  largest valid sample (see Table~\ref{poverty_app1}). For these distribution-sensitive poverty measures, the $L_3$ specification continues to exhibit the narrowest error distribution. Hence, this specification provides not only lower average estimation  errors but also more robust performance across heterogeneous empirical distributions.

\subsection{Estimation of inequality measures from grouped data}
We now turn to the estimation of inequality measures. Table \ref{inequality} presents the absolute errors in the estimation of the Gini index and the MLD, calculated as the absolute difference between the estimated values and the benchmark survey values reported in PIP and calculated from LIS.\footnote{The mean absolute error and selected quantiles of the error distribution computed separately for the PIP and LIS samples are reported in Appendix B, Tables \ref{ineq_app1} and \ref{ineq_app2}, respectively. These results are consistent with the conclusions presented in this section.} In line with the poverty results, the $L_3$ Lorenz curve yields the most accurate estimates for the MLD, with an average error of 0.0066 (2.75 percent). Although the GQ model also performs well, its average error of 0.0094 is about 50 percent higher than that of the $L_3$ specification. Ortega's specification and the SCS curve again show similar performances with estimation errors around 4.5 percent on average.

\begin{table}[btp]
\caption{\label{inequality}
Absolute error in the estimation of inequality measures using alternative Lorenz curve specifications}
 \begin{center}
\begin{tabular}{l l c c c c c}
\toprule												
	&			&Ortega 	&Kakwani 	&SCS 	& $L_3$ 	&GQ \\	
\midrule								MLD	&	$\overline{|\varepsilon|}$	&	0.0130	&	0.0165	&	0.0126	&	0.0066	&	0.0094	\\
	&	(\%)	&	(4.51)	&	(7.33)	&	(4.32)	&	(2.75)	&	(4.03)	\\
Gini index	&	$\overline{|\varepsilon|}$	&	0.0017	&	0.0021	&	0.0015	&	0.0008	&	0.0008	\\
	&	(\%)	&	(0.42)	&	(0.62)	&	(0.38)	&	(0.21)	&	(0.2)	\\		

\bottomrule
\end{tabular}
 \end{center}
\begin{tablenotes}
\item \footnotesize{These figures are computed across 1,735 datasets, comprising 1,146 from the PIP database and 589 from the LIS database. MLD estimates are derived using numerical integration. Gini indices are computed using Eqs. \eqref{Gini_L0}, \eqref{Gini_L1}, \eqref{Gini_L2} and \eqref{Gini_L3} for the models given in (\ref{SpecialCase1}, \ref{lco}, \ref{lc2}, \ref{lc3}) respectively. For the GQ Lorenz curve, the Gini index is computed by numerical integration.}
 \item \footnotesize{(\%) : Mean absolute percentage error, computed as  $\overline{|\hat{g}(.)-\eta\ |/ \eta} \times100$.}
\item \footnotesize{Source: Authors' calculations.}
\end{tablenotes}
\end{table}

Table~\ref{inequality} further suggests that the parametric Lorenz curves 
provide substantially more precise estimates for the Gini index than for 
the MLD. This result is expected given that the estimation procedure is 
based directly on points of the Lorenz curve, and the Gini coefficient 
itself can be expressed geometrically as twice the area between the Lorenz 
curve and the line of perfect equality. The $L_3$ and GQ specifications 
deliver very competitive results, with nearly identical average relative 
errors of 0.2 percent. This error approximately doubles for the SCS and Ortega 
specifications. Kakwani's special 
case again exhibits the largest errors, although on average it also 
provides accurate estimates of the Gini index, deviating from the 
empirical value by less than 1 percent.

\begin{figure}[tbhp]
\centering
\includegraphics[width=\columnwidth]{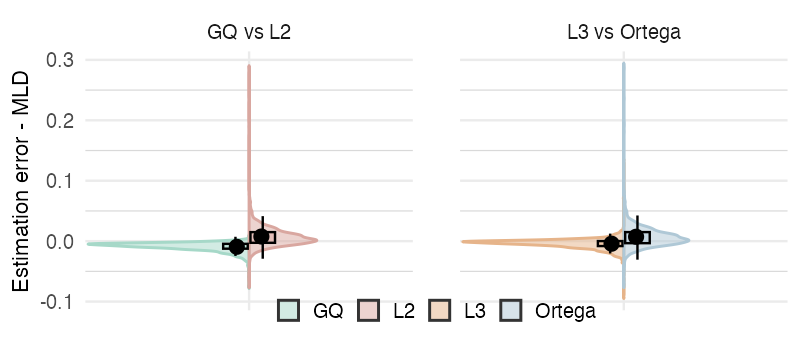}

\vspace{0.1cm}

\includegraphics[width=\columnwidth]{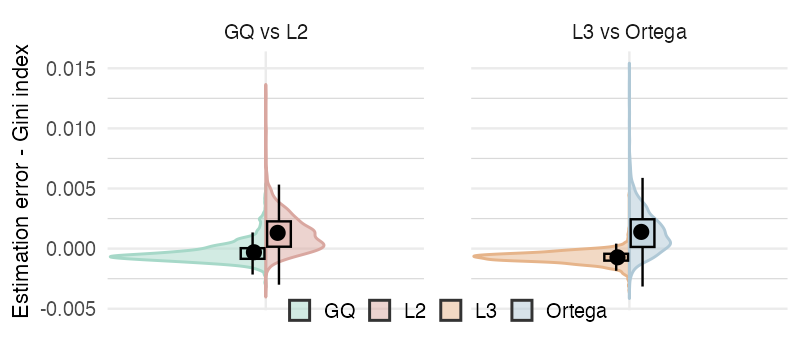}

\caption{\label{violin_ineq}
Distribution of estimation errors for inequality measures.}
\end{figure}

We now examine the full distribution of estimation errors, shown in the upper panel of Figure \ref{violin_ineq} for the MLD and in the lower panel for the Gini index. Despite its strong average performance, the GQ Lorenz curve exhibits a systematic downward bias in the estimation of the MLD, with most of the probability mass concentrated below zero. This pattern is consistent with the high sensitivity of the MLD to the lower tail of the income distribution. As discussed previously, the GQ specification tends to underrepresent the lower tail, which translates into a systematic underestimation of MLD values. The $L_3$ curve also displays a tendency to underestimate this inequality measure, although the bias appears weaker. Both models exhibit relatively low variability in their error distributions, thus suggesting stable performance across datasets. By contrast, the SCS and Ortega specifications show a tendency to overestimate inequality for both the Gini index and the MLD, as reflected by error distributions centred above zero. Moreover, these distributions are noticeably wider, which suggests greater variability in estimation performance across datasets.

These findings are robust to the sample restriction imposed by the balanced 
subsample. Tables~\ref{LIS_app1} and~\ref{poverty_app1} suggest that the 
performance ranking remains unchanged  across both databases. For the 
MLD, the average absolute error of $L_3$ is 0.0076 versus 0.0107 for GQ across 
all 1,006 LIS datasets, and 0.0060 versus 0.0077 across all 1,974 PIP datasets. 
The tendency of the GQ and $L_3$ specifications to underestimate the MLD, documented 
above for the balanced subsample, also persists across the full samples. For the Gini index, the two 
specifications remain broadly competitive across the full samples, consistent with 
the results reported in Table~\ref{inequality}.

\begin{table}[tbp]
\caption{\label{inequality_lis}
Absolute error in the estimation of alternative inequality measures using alternative Lorenz curve specifications}
 \begin{center}
\begin{tabular}{l l c c c c c}
		\toprule											
	& &	 Ortega 	&	 Kakwani 	&	 SCS 	&	 $L_3$ 	&	 GQ \\	
\midrule											
Theil	&$(\overline{|\varepsilon|}$	&0.0146	&0.0149	&0.0121	&0.0104	&0.0109	\\
 index	&(\%)		&(4.60)	&(5.56)	&(4.06)	&(4.26)	&(4.52)	\\
GE(1.5)	&$(\overline{|\varepsilon|}$	&0.0507	&0.0466	&0.0372	&0.0253	&0.0263	\\
	&	(\%)	&(11.25)	&(10.33)	&(9.09)	&(8.11)	&(8.42)	\\
Atkinson	&$(\overline{|\varepsilon|}$	&0.0067	&0.0162	&0.0066	&0.0058	&0.0091	\\
 index (1)	&(\%)	&	(3.05)	&(8.43)	&(2.94)	&(2.98)	&(4.54)	\\
Atkinson	&$(\overline{|\varepsilon|}$	&0.0160	&0.0407	&0.0159	&0.0151	&0.0248	\\
 index (1.5)	&(\%)	&	(5.01)	&(13.50)	&(4.86)	&(4.94)	&(8.00)	\\
	\bottomrule
\end{tabular}
 \end{center}
\begin{tablenotes}
\item \footnotesize{These figures are computed across 589 datasets from the LIS database. Estimates of inequality measures are computed using numerical integration.}
 \item \footnotesize{(\%) : Mean absolute percentage error, computed as  $\overline{|\hat{g}(.)-\eta\ |/ \eta} \times100$.}
\item \footnotesize{Source: Authors' calculations.}
\end{tablenotes}
\end{table}

We conclude this section by evaluating the ability of the Lorenz curve specifications to estimate additional inequality measures. Using the LIS datasets, we compute two members of the GE family: the Theil index and the GE index with parameter equal to 1.5. We also estimate two versions of the Atkinson index with inequality aversion parameters equal to 1 and 1.5.
Table \ref{inequality_lis} reports the mean absolute estimation error with its relative counterpart, which expresses the average deviation between the estimated and empirical measures as a percentage of the survey benchmark. The main finding is that the $L_3$ curve consistently delivers the lowest estimation error across all inequality measures considered.  For the Theil index, $L_3$ and SCS deliver the  lowest relative errors at approximately 4 percent, followed closely by  GQ and Ortega at around 4.5 percent. Kakwani's special case performs  less accurately, with estimation errors roughly one and a half percentage points higher on average.

\begin{figure}[tbhp]
\centering
\includegraphics[width=\columnwidth]{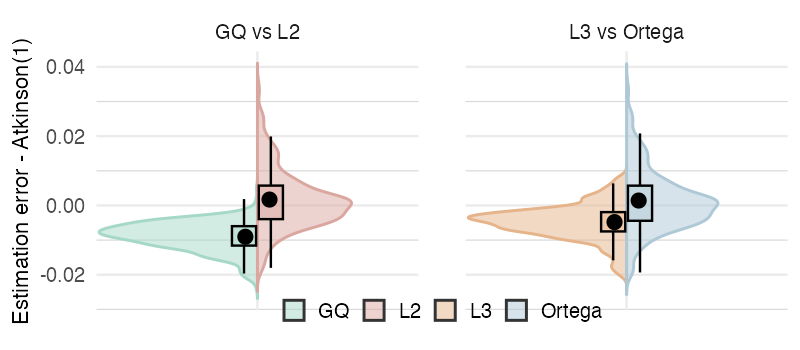}

\vspace{0.1cm}

\includegraphics[width=\columnwidth]{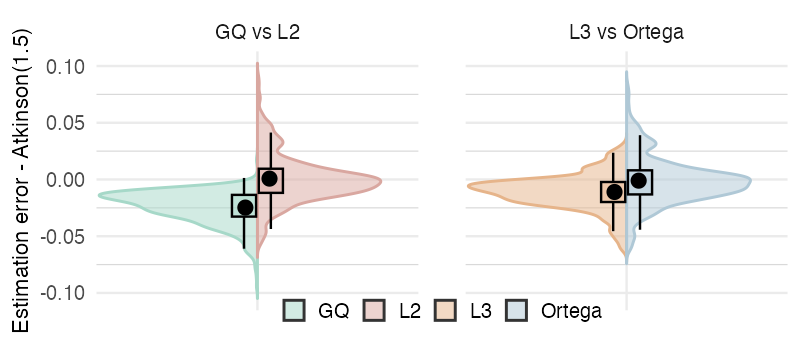}
\vspace{0.1cm}
\includegraphics[width=\columnwidth]{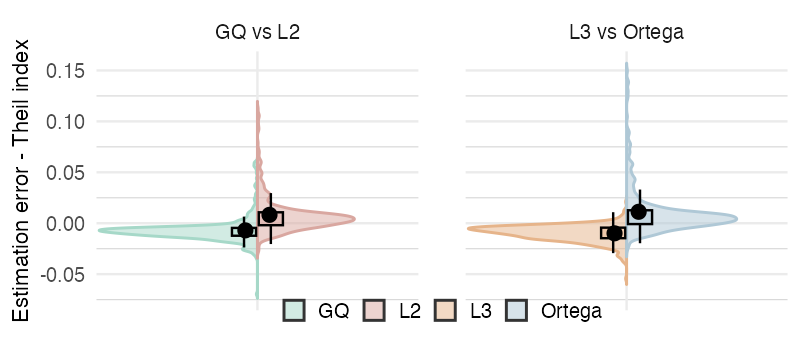}
\caption{\label{violin_ineq_lis}
Distribution of estimation errors for alternative inequality measures.}
\end{figure}


When estimating inequality measures that place greater emphasis on 
redistributions at the upper tail of the distribution, such as GE(1.5), 
estimation errors increase substantially across all specifications. The 
$L_3$ curve achieves the lowest average relative error at 8.11 percent, 
closely followed by the GQ specification (8.42 percent). The SCS curve 
exhibits a noticeably larger error of approximately 9 percent, while 
Ortega's and Kakwani's specifications report average errors close to 11 and 10 
percent respectively.

Turning to the Atkinson index, estimation errors remain comparatively 
small for both values of the inequality aversion parameter. When the 
parameter is equal to 1, the $L_3$, SCS, and Ortega specifications 
yield average relative errors of about 3 percent. Increasing the 
inequality aversion parameter to 1.5 leads to larger estimation errors, 
although these three specifications continue to display very similar 
performance, with average deviations of approximately 5 percent. By 
contrast, the GQ model performs less favorably for the Atkinson index, 
with relative errors of approximately 4.5 percent and 8 percent when 
the aversion parameter is equal to 1 and 1.5, respectively.

The full distribution of estimation errors is presented in Figure \ref{violin_ineq_lis} for the Theil index and the two Atkinson measures. The corresponding violin plots for GE(1.5) are reported in Appendix B, Figure \ref{GE15_v}. Across all inequality measures, the $L_3$ and the GQ curves emerge as the specifications with the lowest variability in estimation errors. Nevertheless, the figure reveals a tendency of $L_3$ and GQ to underestimate these inequality measures, with the bias being particularly pronounced for GQ, whose error distributions are systematically shifted below zero for these measures. 

By contrast, the SCS curve and Ortega's specification exhibit error distributions that are more closely centered around zero. However, these models display considerably greater dispersion, which implies less reliable performance across datasets. This limitation becomes especially evident for GE(1.5), where both specifications generate long upper tails and occasional very large positive estimation errors. These extreme deviations indicate that, although the SCS and Ortega models perform reasonably well on average, they are considerably less robust when applied to a heterogeneous collection of income distributions.

The performance ranking documented above is robust to the sample 
restriction imposed by the balanced subsample. When each model is 
evaluated on its own largest valid sample, the $L_3$ curve continues 
to deliver the lowest average estimation error across all measures 
considered (see Table~\ref{LIS_app1}). Notably, average estimation errors tend 
to be somewhat larger in the full samples than in the balanced subsample, 
suggesting that the datasets excluded by the balanced subsample 
restriction are on average slightly harder to approximate, consistent 
with the higher failure rates of $L_2$ and GQ that drive most of the 
exclusions.

\subsection{Sampling uncertainty of poverty and inequality estimates}

Thus far, our analysis has focused on the point-estimate accuracy of alternative Lorenz curve specifications in computing poverty and inequality measures from grouped data. However, a complete evaluation also requires examining the sampling distributions of these estimators. As discussed in Section~\ref{methods}, computing valid standard errors 
for EWMD-based estimates requires knowledge of the variance-covariance 
matrix of the income shares. Although asymptotic 
standard errors can be derived analytically under the Optimal Minimum 
Distance framework \citep{Hajargasht2016, jorda2021}, their 
implementation requires the original survey sample size and sampling 
design, information that is rarely available in secondary databases. 

In the absence of the information required to compute analytical standard errors, Monte Carlo simulation provides a practical way to characterize the sampling uncertainty of EWMD-based poverty and inequality estimates. We estimate the finite-sample bias and variability of alternative poverty and inequality measures using different sample sizes ($N=500$, $N=2,500$, and $N=5,000$). For each dataset and model, we generate $1,000$ synthetic samples and compute the grouped income shares. These synthetic income shares are then used to estimate the corresponding Lorenz curve specification from which we derive 
 poverty and inequality measures. This procedure allows us to obtain $1,000$ estimates of a given measure, that represent a nonparametric approximation of the sampling distribution, from which the bias and the standard error can be computed. The full simulation procedure is detailed in Appendix A.

Table~\ref{sd_models_b} reports the average standard deviation across 
the pooled LIS and PIP datasets for the main poverty and inequality 
measures.\footnote{All specifications exhibit relatively small 
finite-sample bias that systematically declines as sample size 
increases, consistent with asymptotic unbiasedness (see Appendix B, 
Tables \ref{bias_models_b} -- \ref{bias_models_l})}. As expected, standard errors decline 
monotonically with sample size across all specifications and measures. 
However, differences across models are particularly notable for 
inequality measures.

For the MLD and the Gini index, $L_3$ and GQ deliver the lowest 
standard errors across all sample sizes, with nearly identical 
variability between the two specifications. The remaining models exhibit 
substantially higher sampling uncertainty: at $N = 500$, the standard 
deviations of Ortega's and Kakwani's specifications are, on average, 
approximately 60 percent higher than those of $L_3$ and GQ, while the 
SCS curve yields variability estimates roughly 20 percent above those 
of $L_3$. By contrast, standard errors for the societal poverty rate 
are virtually identical across all five specifications at every sample 
size, suggesting that sampling variability is insensitive to the choice 
of parametric specification for this poverty measure.

\begin{table}[tb]
\caption{\label{sd_models_b}
Average standard errors of poverty and inequality measures using alternative Lorenz curve specifications.: LIS and PIP data}

 \begin{center}
\begin{tabular}{l c c c c c}
\toprule
	&Ortega	&Kakwani 	    & SCS	&$L_3$	&GQ\\
    	&       	&($\alpha = 1$)	    &  	&       	&       \\											\midrule\multicolumn{6}{l}{MLD}\\  \midrule											
$N = 500$	&	0.0317	&	0.0316	&	0.0238	&	0.0193	&	0.0197	\\
$N = 2500$	&	0.0154	&	0.0158	&	0.0113	&	0.0087	&	0.0094	\\
$N = 5000$	&	0.0112	&	0.0117	&	0.0082	&	0.0062	&	0.0068	\\
\midrule\multicolumn{6}{l}{Gini index}\\  \midrule											
$N = 500$	&	0.0213	&	0.0203	&	0.0169	&	0.0135	&	0.0143	\\
$N = 2500$	&	0.0106	&	0.0103	&	0.0082	&	0.0061	&	0.0067	\\
$N = 5000$	&	0.0077	&	0.0077	&	0.0060	&	0.0043	&	0.0048	\\
\midrule\multicolumn{6}{l}{SPR}\\  \midrule											
$N = 500$	&	0.0155	&	0.0154	&	0.0153	&	0.0156	&	0.0155	\\
$N = 2500$	&	0.0070	&	0.0069	&	0.0069	&	0.0069	&	0.0069	\\
$N = 5000$	&	0.0049	&	0.0049	&	0.0049	&	0.0049	&	0.0049	\\
								\bottomrule
\end{tabular}
 \end{center}
\begin{tablenotes}
\item \footnotesize{These figures are computed across 1,735 datasets, comprising 1,146 from the PIP database and 589 from the LIS database. Standard errors are estimated by Monte Carlo simulation of of $10^3$  samples of size $N = 500, 2500, 5000.$}
\item \footnotesize{Source: authors' compilation.}
\end{tablenotes}
\end{table}

\begin{table}[tbp]
\caption{\label{sd_models_p}
Average standard errors of poverty measures using alternative Lorenz curve specifications. PIP data}

 \begin{center}
\begin{tabular}{l c c c c c}
\toprule
	&Ortega	&Kakwani 	     &SCS	&$L_3$	&GQ\\
    	&       	&($\alpha = 1$)	    &  	&       	&       \\								
\midrule\multicolumn{6}{l}{Poverty headcount}\\  \midrule					$N = 500$	&	0.0140	&	0.0141	&	0.0134	&	0.0131	&	0.0140	\\
$N = 2500$	&	0.0064	&	0.0065	&	0.0061	&	0.0059	&	0.0063	\\
$N = 5000$	&	0.0045	&	0.0047	&	0.0043	&	0.0042	&	0.0045	\\
\midrule\multicolumn{6}{l}{Poverty gap}\\  \midrule											
$N = 500$	&	0.0075	&	0.0062	&	0.0072	&	0.0065	&	0.0062	\\
$N = 2500$	&	0.0035	&	0.0029	&	0.0032	&	0.0029	&	0.0028	\\
$N = 5000$	&	0.0025	&	0.0021	&	0.0023	&	0.0021	&	0.0020	\\
\midrule\multicolumn{6}{l}{Poverty severity}\\  \midrule											
$N = 500$	&	0.0053	&	0.0038	&	0.0050	&	0.0042	&	0.0038	\\
$N = 2500$	&	0.0024	&	0.0018	&	0.0022	&	0.0019	&	0.0017	\\
$N = 5000$	&	0.0018	&	0.0013	&	0.0016	&	0.0013	&	0.0012	\\
\midrule\multicolumn{6}{l}{Watts index}\\  \midrule											
$N = 500$	&	0.0148	&	0.0105	&	0.0142	&	0.0116	&	0.0102	\\
$N = 2500$	&	0.0069	&	0.0049	&	0.0064	&	0.0052	&	0.0046	\\
$N = 5000$	&	0.0049	&	0.0035	&	0.0045	&	0.0037	&	0.0032	\\
\bottomrule
\end{tabular}
 \end{center}
\begin{tablenotes}
\item \footnotesize{These figures are computed across 1,146 datasets from the PIP database. Standard errors are estimated by Monte Carlo simulation of of $10^3$  samples of size $N = 500, 2500, 5000.$}
\item \footnotesize{Source: authors' compilation.}
\end{tablenotes}
\end{table}

\begin{table}[tbp]
\caption{\label{sd_models_l}
Average standard errors of poverty and inequality measures using alternative Lorenz curve specifications. LIS data}

 \begin{center}
\begin{tabular}{l c c c c c}
\toprule
	&Ortega	&Kakwani 	    & SCS	&$L_3$	&GQ\\
    	&       	&($\alpha = 1$)	    &  	&       	&       \\								
\midrule\multicolumn{6}{l}{Atkinson index (1)}\\  \midrule					$N = 500$	&	0.0182	&	0.0174	&	0.0157	&	0.0134	&	0.0136	\\
$N = 2500$	&	0.0087	&	0.0084	&	0.0074	&	0.0061	&	0.0063	\\
$N = 5000$	&	0.0063	&	0.0061	&	0.0053	&	0.0043	&	0.0045	\\
\midrule\multicolumn{6}{l}{Atkinson index (1.5)}\\  \midrule											
$N = 500$	&	0.0214	&	0.0192	&	0.0206	&	0.0182	&	0.0177	\\
$N = 2500$	&	0.0098	&	0.0091	&	0.0093	&	0.0082	&	0.0081	\\
$N = 5000$	&	0.0069	&	0.0066	&	0.0066	&	0.0058	&	0.0058	\\
\midrule\multicolumn{6}{l}{Theil index}\\  \midrule											
$N = 500$	&	0.0506	&	0.0507	&	0.0315	&	0.0226	&	0.0245	\\
$N = 2500$	&	0.0254	&	0.0250	&	0.0156	&	0.0108	&	0.0124	\\
$N = 5000$	&	0.0187	&	0.0185	&	0.0115	&	0.0078	&	0.0091	\\
\midrule\multicolumn{6}{l}{GE(1.5)}\\  \midrule											
$N = 500$	&	0.2922	&	0.2128	&	0.1190	&	0.0426	&	0.0407	\\
$N = 2500$	&	0.1542	&	0.1210	&	0.0466	&	0.0187	&	0.0223	\\
$N = 5000$	&	0.1159	&	0.0957	&	0.0377	&	0.0135	&	0.0170	\\
\bottomrule
\end{tabular}
 \end{center}
\begin{tablenotes}
\item \footnotesize{These figures are computed across  589 datasets from the LIS database. Standard errors are estimated by Monte Carlo simulation of of $10^3$  samples of size $N = 500, 2500, 5000.$}
\item \footnotesize{Source: authors' compilation.}
\end{tablenotes}
\end{table}

This conclusion is also confirmed in Table~\ref{sd_models_p} for 
extreme poverty measures. For the poverty headcount ratio, poverty gap, 
and poverty severity index, standard errors are very similar across 
specifications, particularly at larger sample sizes. At $N = 500$, 
Ortega's specification and the SCS curve exhibit slightly higher 
variability than the remaining models, though the differences are 
modest. This pattern is more pronounced for the Watts index, where the 
standard deviations of Ortega's and SCS specifications are on average 
approximately 30 percent higher than those of $L_3$, and around 45 
percent higher than those of GQ and Kakwani's special case. Notably, 
Kakwani's special case exhibits relatively low sampling variability for 
poverty estimates despite producing the largest estimation errors among 
all specifications considered (see Table~\ref{poverty}).

Table~\ref{sd_models_l} examines the sampling variability of inequality 
measures that are sensitive to the tails of the income distribution. 
For the Atkinson indices, which place greater weight on redistributions 
among the poorest members of society, $L_3$ and GQ yield the lowest 
standard errors on average, with comparable variability between the 
two specifications. The remaining models exhibit standard errors 
approximately 20 to 30 percent higher, depending on the specification 
and sample size.

For the Theil index, $L_3$ achieves the lowest standard errors by a 
substantial margin of more than half the size of those produced by 
Ortega's specification and Kakwani's special case. The GQ specification 
displays similar variability to $L_3$ for this measure, with standard 
errors that are approximately 8.5 percent higher at $N = 500$ and 
around 15 percent higher at $N = 5{,}000$. The SCS curve falls 
between these two groups, with standard errors roughly 40 percent 
above those of $L_3$.

Finally, for GE(1.5), which is particularly sensitive to redistributions 
among the highest earners, $L_3$ and GQ display broadly similar 
variability. At $N = 500$, the standard deviation of $L_3$ is 
marginally higher than that of GQ (0.0426 versus 0.0407, a difference 
of approximately 4.6 percent). This gap reverses at larger sample 
sizes, with $L_3$ exhibiting lower variability than GQ by approximately 
12 percent at $N = 2{,}500$ and 20 percent at $N = 5{,}000$. The 
remaining specifications display substantially higher sampling 
uncertainty for this measure. Compared to the variability of the $L_3$ curve, 
standard errors are approximately six 
times larger for Ortega's specification, four times larger for 
Kakwani's special case, and twice as large for the SCS curve relative 
to $L_3$ at $N = 500$.

\section{Concluding remarks}
Despite the increasing accessibility of individual data, official income 
microdata remain unavailable in many countries. Consequently, researchers, 
practitioners, and policymakers must frequently rely on publicly available 
summary statistics of income distributions. In this context, parametric 
models serve as essential tools for reconstructing underlying distributions 
from grouped data. The Beta Lorenz curve, introduced by \citet{kakwani1980} 
has emerged as one of the most prominent specifications 
for this purpose. Its significance extends well beyond academic research: 
the model serves as a cornerstone of institutional poverty monitoring, 
being used by the World Bank to produce official global poverty estimates 
when microdata are unavailable.

In this paper, we demonstrate that Kakwani's original
proposal does not generally satisfy the 
mathematical properties of a genuine Lorenz curve. We identify the 
unique case within this family that remains theoretically consistent 
across the entire unit interval. Building on this corrected framework, 
we propose a new specification rooted in Kakwani's parametric 
framework that ensures adherence to the mathematical axioms of a 
Lorenz curve while delivering superior empirical performance. 

We evaluate the performance of this new specification through an extensive 
empirical assessment combining point-estimate accuracy and Monte Carlo 
simulations to characterize its finite-sample properties. Using more 
than 1,700 empirical datasets from the LIS and PIP databases, we 
compare its ability to recover a broad set of poverty and inequality 
measures from grouped data against that of three competing 
specifications (Ortega's specification, SCS curve, and Kakwani's 
special case), all of which can be expressed as special cases or direct 
extensions of the Beta Lorenz curve. As a primary institutional benchmark, 
we further compare our proposed specification against the GQ Lorenz 
curve.

Our results indicate that the $L_3$ specification yields genuine Lorenz 
curves in all datasets considered. Among all specifications evaluated, 
it delivers the lowest average estimation errors for both poverty and 
inequality measures, and its error distributions exhibit the lowest 
variability across datasets. These findings suggest that $L_3$ provides 
both the most accurate and the most robust estimates of poverty and 
inequality measures among the functional forms considered. The only 
potential caveat is a tendency to underestimate inequality measures, 
although this bias is substantially limited by the narrow spread of 
the error distributions.

The GQ specification emerges as the second best-performing model. 
However, it displays two important limitations relative to $L_3$. 
First, in a non-negligible share of empirical datasets, 
the estimates of GQ model are mathematically inadmissible as 
Lorenz curves. Second, even when the GQ specification 
yields a valid curve, it displays systematic deviations in the estimation 
of both poverty and inequality measures. In particular, GQ tends to 
underestimate poverty when evaluated at the extreme poverty line while 
overestimating poverty incidence at the societal poverty line. For 
inequality measures, its tendency toward underestimation is more 
pronounced than that of $L_3$, particularly for measures sensitive to 
the lower tail of the income distribution.

Taken together, these results indicate that $L_3$ achieves the most 
favorable combination of point-estimate accuracy and sampling precision 
across the full range of poverty and inequality measures considered. 
Its superiority in point-estimate accuracy is not achieved at the cost 
of greater sampling uncertainty since $L_3$ exhibits the lowest or 
comparable standard errors relative to GQ across most measures. By 
contrast, the remaining specifications are less reliable on both dimensions, particularly for 
measures that are sensitive to the tails of the income distribution, 
where their sampling variability can be several times larger than that 
of $L_3$.

Ultimately, the choice of parametric Lorenz curve specification is far 
more than a technical exercise. Practitioners have historically faced 
a trade-off between approximation accuracy, sampling precision, and 
operational reliability. The $L_3$ curve introduced in this paper 
resolves this trade-off simultaneously on all three dimensions, 
representing a substantial improvement over existing alternatives, 
including the GQ specification that currently underpins official World 
Bank poverty estimates. Our results therefore suggest that adopting 
$L_3$ in place of existing specifications would improve the accuracy 
and reliability of official poverty and inequality estimates, 
contributing to a more precise and theoretically grounded assessment 
of living standards in countries where individual data are unavailable.

 \section*{Acknowledgements}
MS and VJ acknowledge financial support from the I+D+i project (Ref. PID2024-156871NB-I00), 
financed by MICIU/AEI/10.13039/501100011033 and FEDER/UE. EGD acknowledges partial funding 
from grant PID2021-127989OB-I00, financed by the Agencia Estatal de de Investigaci\'on 
(AEI, Spain). This study uses data from the Luxembourg Income Study Database and 
the World Bank's Poverty and Inequality Platform. We thank both institutions for 
making these datasets publicly available. The authors are grateful to Daniel Gerzson Mahler, 
Christoph Lakner, Dean Jolliffe, Zurab Sajaia, Elena B\'arcena, and Raquel Sebasti\'an for their 
valuable comments and suggestions. We also thank the seminar participants at the World 
Bank's Development Data Group and the XXXIII Meeting on Public Economics for their helpful 
feedback and discussions. Any remaining errors are solely our own responsibility.

\bibliography{biblio_kakwani}

@incollection{worldbank2018,
	author = {{World Bank}},
	crossref = {worldbank2017},
	title = {Poverty and Shared Prosperity Report 2018: Piecing Together the Poverty Puzzle},
	year = {2018}}

@misc{worldbank2017,
	author = {{World Bank}},
	publisher = {Washington, DC: World Bank},
	title = {Monitoring global poverty: Report of the Commission on Global Poverty},
	year = {2017}}

@article{yitzhaki1983extension,
	author = {Yitzhaki, Shlomo},
	journal = {International Economic Review},
	pages = {617--628},
	publisher = {JSTOR},
	title = {On an extension of the {G}ini inequality index},
	year = {1983}}

@article{donaldson1980single,
	author = {Donaldson, David and Weymark, John A},
	journal = {Journal of Economic Theory},
	number = {1},
	pages = {67--86},
	publisher = {Elsevier},
	title = {A single-parameter generalization of the {G}ini indices of inequality},
	volume = {22},
	year = {1980}}

@book{arnold2018,
	author = {Arnold, Barry C and Sarabia, Jos{\'e} Mar{\'\i}a},
	publisher = {Springer},
	title = {Majorization and the Lorenz order with applications in applied mathematics and economics},
	volume = {7},
	year = {2018}}

@book{chotikapanich2008,
	author = {Chotikapanich, Duangkamon},
	publisher = {Springer Science \& Business Media},
	title = {Modeling income distributions and Lorenz curves},
	volume = {5},
	year = {2008}}

@article{Hajargasht2016,
	author = {Hajargasht, Gholamreza and Griffiths, William E},
	journal = {Econometric Reviews},
	number = {4},
	pages = {344--361},
	publisher = {Taylor \& Francis},
	title = {Minimum distance estimation of parametric {L}orenz curves based on grouped data},
	volume = {39},
	year = {2020}}

@article{mcdonald1984,
	author = {McDonald, James B.},
	copyright = {Copyright {\copyright} 1984 The Econometric Society},
	issn = {00129682},
	journal = {Econometrica},
	jstor_articletype = {research-article},
	jstor_formatteddate = {May, 1984},
	language = {English},
	number = {3},
	pages = {647-665},
	publisher = {The Econometric Society},
	title = {Some Generalized Functions for the Size Distribution of Income},
	volume = {52},
	year = {1984}}

@article{rohde2009,
	author = {Rohde, Nicholas},
	journal = {Economics Letters},
	number = {1},
	pages = {61--63},
	publisher = {Elsevier},
	title = {An alternative functional form for estimating the {L}orenz curve},
	volume = {105},
	year = {2009}}

@article{sarabia1999,
	author = {Sarabia, J-M and Castillo, Enrique and Slottje, Daniel J},
	journal = {Journal of Econometrics},
	number = {1},
	pages = {43--60},
	publisher = {Elsevier},
	title = {An ordered family of {L}orenz curves},
	volume = {91},
	year = {1999}}

@article{ryu1996,
	author = {Ryu, Hang K and Slottje, Daniel J},
	journal = {Journal of econometrics},
	number = {1-2},
	pages = {251--274},
	publisher = {Elsevier},
	title = {Two flexible functional form approaches for approximating the {L}orenz curve},
	volume = {72},
	year = {1996}}

@article{arnold1987,
	author = {Arnold, Barry C and Robertson, Christopher A and Brockett, Patrick L and Shu, Boo-Yau},
	journal = {Journal of Business \& Economic Statistics},
	number = {2},
	pages = {305--308},
	publisher = {Taylor \& Francis},
	title = {Generating ordered families of {L}orenz curves by strongly unimodal distributions},
	volume = {5},
	year = {1987}}

@article{holm1993,
	author = {Holm, Juhani},
	journal = {Journal of Econometrics},
	number = {3},
	pages = {377--389},
	publisher = {Elsevier},
	title = {Maximum entropy {L}orenz curves},
	volume = {59},
	year = {1993}}

@article{chotikapanich1993,
	author = {Chotikapanich, Duangkamon},
	journal = {Economics Letters},
	number = {2},
	pages = {129--138},
	publisher = {Elsevier},
	title = {A comparison of alternative functional forms for the {L}orenz curve},
	volume = {41},
	year = {1993}}

@article{ortega1991,
	author = {Ortega, P and Martin, G and Fernandez, A and Ladoux, M and Garcia, A},
	journal = {Review of Income and Wealth},
	number = {4},
	pages = {447--452},
	publisher = {Wiley Online Library},
	title = {A new functional form for estimating {L}orenz curves},
	volume = {37},
	year = {1991}}

@article{basmann1990,
	author = {Basmann, Robert L and Hayes, Kathy Jean and Slottje, Daniel Jonathan and Johnson, JD},
	journal = {Journal of Econometrics},
	number = {1-2},
	pages = {77--90},
	publisher = {Elsevier},
	title = {A general functional form for approximating the {L}orenz curve},
	volume = {43},
	year = {1990}}

@techreport{pakes1981,
	author = {Pakes, Ariel G},
	institution = {Department of Mathematics, University of Western Australia},
	publisher = {University of Western Australia. Department of Mathematics},
	title = {On income distributions and their {L}orenz curves},
	year = {1981}}

@techreport{wb_handbook,
	author = {Aron, Danielle V. and Castaneda Aguilar, R. Andres and Diaz-Bonilla, Carolina and Fujs, Tony H. M. J. and Garc{\'\i}a R., Diana C. and Hill, Ruth and Jularbal, Lali and Lakner, Christoph and Lara Ibarra, Gabriel and Mahler, Daniel G. and Nguyen, Minh C. and Nursamsu, Samuel and Sabatino, Carlos and Sajaia, Zurab and Seitz, William and Sjahrir, Bambang Suharnoko and Tetteh-Baah, Samuel K. and Viveros Mendoza, Martha C. and Winkler, Hernan and Wu, Haoyu and Yonzan, Nishant},
	institution = {World Bank Group},
	month = {September},
	number = {39},
	title = {September 2024 Update to the Poverty and Inequality Platform},
	type = {Technical Note},
	year = {2024}}

@article{wilson2022,
	author = {Wilson, Edgar J and Jayanthakumaran, Kankesu and Verma, Reetu},
	journal = {Review of Development Economics},
	number = {2},
	pages = {941--961},
	publisher = {Wiley Online Library},
	title = {Urban poverty, growth, and inequality: A needed paradigm shift?},
	volume = {26},
	year = {2022}}

@article{ravallion2022,
	author = {Ravallion, Martin and Chen, Shaohua},
	journal = {The Journal of Economic Inequality},
	number = {4},
	pages = {749--776},
	publisher = {Springer},
	title = {Is that really a {K}uznets curve? Turning points for income inequality in China},
	volume = {20},
	year = {2022}}

@article{kobayashi2022,
	author = {Kobayashi, Genya and Yamauchi, Yuta and Kakamu, Kazuhiko and Kawakubo, Yuki and Sugasawa, Shonosuke},
	journal = {Journal of Business \& Economic Statistics},
	number = {2},
	pages = {897--912},
	publisher = {Taylor \& Francis},
	title = {Bayesian approach to {L}orenz curve using time series grouped data},
	volume = {40},
	year = {2022}}

@article{bresson2009,
	author = {Bresson, Florent},
	journal = {Review of Income and Wealth},
	number = {2},
	pages = {266--302},
	publisher = {Wiley Online Library},
	title = {On the estimation of growth and inequality elasticities of poverty with grouped data},
	volume = {55},
	year = {2009}}

@article{minoiu2009,
	author = {Minoiu, Camelia and Reddy, Sanjay G},
	journal = {Journal of Income Distribution},
	number = {2},
	pages = {160-178},
	title = {Estimating poverty and inequality from grouped data: How well do parametric methods perform?},
	volume = {18},
	year = {2009}}

@article{villasenor1989,
	author = {Villase{\~n}or, Jos{\'e}A and Arnold, Barry C},
	journal = {Journal of Econometrics},
	number = {2},
	pages = {327--338},
	publisher = {Elsevier},
	title = {Elliptical {L}orenz curves},
	volume = {40},
	year = {1989}}

@article{kakwani1980,
	author = {Kakwani, Nanak},
	journal = {Econometrica: Journal of the Econometric Society},
	pages = {437--446},
	publisher = {JSTOR},
	title = {On a class of poverty measures},
	volume = {48},
	year = {1980}}

@incollection{arnold2008,
	author = {Arnold, Barry C},
	booktitle = {Modeling income distributions and Lorenz curves},
	editor = {Chotikapanich, Duangkamon},
	pages = {119--145},
	publisher = {Springer},
	title = {Pareto and generalized {P}areto distributions},
	year = {2008}}

@article{blanchet2022,
	author = {Blanchet, Thomas and Fournier, Juliette and Piketty, Thomas},
	journal = {Review of Income and Wealth},
	number = {1},
	pages = {263--288},
	publisher = {Wiley Online Library},
	title = {Generalized {P}areto curves: theory and applications},
	volume = {68},
	year = {2022}}

@article{jorda2021,
	author = {Jorda, Vanesa and Sarabia, Jos{\'e} Mar{\'\i}a and J{\"a}ntti, Markus},
	journal = {Journal of the Royal Statistical Society Series A: Statistics in Society},
	number = {3},
	pages = {964--984},
	publisher = {Oxford University Press},
	title = {Inequality measurement with grouped data: Parametric and non-parametric methods},
	volume = {184},
	year = {2021}}

@article{kakwani1976,
	author = {Kakwani, NC and Podder, N},
	journal = {Econometrica: Journal of the Econometric Society},
	pages = {137--148},
	publisher = {JSTOR},
	title = {Efficient Estimation of the {L}orenz Curve and Associated Inequality Measures from Grouped Observations},
	volume = {44},
	year = {1976}}

@article{rasche1980,
	author = {Rasche, Robert H and Gaffney, Janice and Koo, Anthony YC and Obst, Norman},
	journal = {Econometrica},
	number = {4},
	title = {Functional forms for estimating the {L}orenz curve.},
	volume = {48},
	year = {1980}}

@article{sitthiyot2021simple,
	author = {Sitthiyot, Thitithep and Holasut, Kanyarat},
	journal = {Humanities and Social Sciences Communications},
	number = {1},
	pages = {268},
	publisher = {Palgrave},
	title = {A simple method for estimating the {L}orenz curve},
	volume = {8},
	year = {2021}}

@article{shen2024regression,
	author = {Shen, Xiaobo and Dai, Pingsheng},
	journal = {Humanities and Social Sciences Communications},
	number = {1},
	pages = {1--8},
	publisher = {Palgrave},
	title = {A regression method for estimating {G}ini index by decile},
	volume = {11},
	year = {2024}}

@article{cheong2002empirical,
	author = {Cheong, Kwang Soo},
	journal = {Applied Economics Letters},
	number = {3},
	pages = {171--176},
	publisher = {Taylor \& Francis},
	title = {An empirical comparison of alternative functional forms for the {L}orenz curve},
	volume = {9},
	year = {2002}}

@article{schader1994fitting,
	author = {Schader, Martin and Schmid, Friedrich},
	journal = {Empirical Economics},
	number = {3},
	pages = {361--370},
	publisher = {Springer},
	title = {Fitting parametric {L}orenz curves to grouped income distributions--a critical note},
	volume = {19},
	year = {1994}}

\newpage
\section*{Appendix A}
\setcounter{table}{0}
\setcounter{figure}{0}
\setcounter{equation}{0}
\renewcommand{\theequation}{A\arabic{equation}}
\renewcommand{\thetable}{B\arabic{table}}
\renewcommand{\thefigure}{B\arabic{figure}}

\subsection*{Proof of Theorem 2.2}
\begin{proof}
The proof is based on Theorem \ref{theorem1}. Conditions $L(0)=0$ and $L(1)=1$ are satisfied for (\ref{kakwaniLC}). Now, since (\ref{kakwaniLC}) is differentiable we have,
\begin{equation*}
L'(p; a, \alpha, \beta) = 1-a p^{\alpha -1} (1-p)^{\beta -1} (\alpha -\alpha p+\beta  p).
\end{equation*}
Now we distinguish three cases,\\
Case 1: $\alpha < 1$, then,
$$
\lim_{p \to 0^+} L'(p; a, \alpha, \beta) = -\infty
$$
and the third condition in (\ref{LCgenuine}) it is not satisfied.\\
Case 2: $\alpha = 1$, then
$$
L'(0^+; a, \alpha, \beta) = 1 - a,
$$
and then $a \leq 1$.\\
Case 3: $\alpha > 1$,
$$
L'(0^+; a, \alpha, \beta) = 1
$$
and the condition is verified. Consequently, we have two possible situations,
\begin{equation}\label{RR1}
\alpha = 1\;\;\mbox{and}\;\;0\le a\le 1,
\end{equation}
or
\begin{equation}\label{RR2}
\alpha > 1.
\end{equation}
Let's see that case (\ref{RR1}) leads to model given in the Theorem statement. Since $\alpha=1$, (\ref{kakwaniLC}) is of the form (\ref{SpecialCase1}). Then, we prove now that (\ref{SpecialCase1}) is a genuine Lorenz curve if $0\le a,\beta\le 1$. First, $L_1(0)=0$ and $L_1(1)=1$. Now,
$$
L'_1(p)=1-a(1-p)^{\beta-1}(1-p+p\beta),
$$
and $L'_1(0)=1-a\ge 0$, in consequence $a\le 1$. The second derivative is,
$$
L''_1(p)=a\beta(1-p)^{\beta-2}(2-p-\beta p),
$$
and it is direct that $L''_1(p)\ge 0$ iff $\beta\le 1$ and this completes the result.\\

We continue with the case corresponding to (\ref{RR2}). Then, the second derivative of (\ref{kakwaniLC}) is given by,
\begin{equation*}
L''(p; a, \alpha, \beta) = \frac{a(1-p)^\beta p^{\alpha-2} u(p)}{(1-p)^2},
\end{equation*}
where,
\begin{equation*}
u(p) = a_2 p^2 + a_1 p + a_0
\end{equation*}
and
\begin{eqnarray*}
a_2 &=& \alpha + \beta - (\alpha + \beta)^2, \\
a_1 &=& 2\alpha\beta + 2\alpha^2 - 2\alpha, \\
a_0 &=& \alpha - \alpha^2,\\
\Delta&=&4\alpha\beta(\alpha+\beta-1),
\end{eqnarray*}
where $\Delta$ is the discriminant of the equation $u(p)=0$. Then $L''(p)\ge 0$ is verified if and only if $u(p)\geq 0$ with $p \in [0,1]$. We distinguish two cases: complex roots and real roots.

Case 1, complex roots. In this case by considering the discriminant, we obtain the conditions,
$$
\alpha+\beta<1,\;\;0<\alpha<1,
$$
which are not possible, since they contradict (\ref{RR2}).

Case 2: real roots. Real roots and $u(p)$ presents a minimum. If $p_0=\frac{\alpha}{\alpha+\beta}$ is the critical point of $u(p)$, it must satisfy $p_0>0$ and $u''(p_0)>0$, which leads to the conditions
$$\alpha>0,\;\;0<\alpha+\beta<1,$$
which contradict (\ref{RR2}). Finally, $u(p)$ presents a maximum and must be a positive function. Then, $u(0)\ge 0$, $u(1)\ge 1$, $p_0>0$ and $u''(p_0)>0$, which leads to
$$0<\alpha,\beta\le 1,\;\;\alpha+\beta>1,$$
which again contradict (\ref{RR2}).\\
Border cases can be easily verified.
In conclusion, the only genuine Lorenz curve based on Kakwani's proposal given in (\ref{kakwaniLC}), is the curve (\ref{SpecialCase1}), together with the parameter restrictions included in Theorem \ref{theorem2}.
\end{proof}

\subsection*{Additional properties of the special case of Kakwani's Lorenz curve}
A relevant generalization of the Gini index was considered by \citet{donaldson1980single}, \citet{kakwani1980} and \citet{yitzhaki1983extension}. These authors proposed the generalized Gini index defined as,
\begin{equation}\label{genGini}
G(\nu)=1-\mu(\nu+1)\int_{0}^{1}(1-p)^{\nu-1}L_X(p)dp,
\end{equation}
where $\nu\ge 1$ and $L_X(p)$ is the Lorenz curve. If we set $\nu=1$ in (\ref{genGini}) we obtain the usual Gini index. For the family (\ref{SpecialCase1}) we have,
$$
G(\nu)=1-\nu(\nu+1)\left(\frac{1}{\nu(\nu+1)}-\frac{a}{(\beta+\nu)(\beta+\nu+1)}\right).
$$
The quantile function associated with the corresponding income distribution with Lorenz curve (\ref{SpecialCase1}) is
$$
Q(p;a,\beta,\mu)=\mu\left(1-a(1-p)^\beta+a\beta p(1-p^{\beta-1}\right),
$$
where $\mu$ is the income mean and $0\le p\le 1$. Note that the support of the income distribution is $[\mu(1-a),\infty)$.

\subsection*{Monte Carlo simulation procedure}

For each dataset $d$, model $m$, and sample size $N \in \{500, 2{,}500, 
5{,}000\}$, the Monte Carlo procedure proceeds as follows:

\begin{enumerate}
\item Using the estimated model $L_m(p; \hat{\boldsymbol{\theta}}_{d,m})$, 
simulate a synthetic sample of size $N$, 
$\mathbf{x}^{*(r)}_{d,m} = (x^{*(r)}_1, \dots, x^{*(r)}_N)$.
\item Compute the grouped income shares $\mathbf{s}^{*(r)}_{d,m}$ and 
mean income $\bar{x}^{*(r)}_{d,m}$ from this synthetic sample, following 
the same grouping structure (deciles) used in the main analysis.
\item Re-estimate the parameter vector of Lorenz curve model $m$, 
$\hat{\boldsymbol{\theta}}^{*(r)}_{d,m}$, using the EWMD estimator applied 
to $\mathbf{s}^{*(r)}_{d,m}$, following the procedure described in 
Section~\ref{methods}.
\item From the fitted specification $L_m(p; \hat{\boldsymbol{\theta}}^{*(r)}_{d,m})$ 
and using the synthetic mean income $\bar{x}^{*(r)}_{d,m}$, compute the 
poverty and inequality measures of interest via numerical integration 
or the relevant closed-form expression, denoted 
$\hat{g}^{(r)}_{d,m}$.
\item Repeat steps 1--4 for $r = 1, \dots, R$, with $R = 1{,}000$.
\end{enumerate}

For a given poverty or inequality measure, this procedure yields $R$ 
estimates $\{\hat{g}^{(r)}_{d,m}\}_{r=1}^{R}$ for dataset $d$ and model 
$m$. This set of estimates provides a nonparametric approximation of the 
sampling distribution of the estimator for that particular measure. The 
finite-sample bias is computed as
\begin{equation}
b_{d,m} = \frac{1}{R}\sum_{r=1}^{R} \hat{g}^{(r)}_{d,m} - \hat{g}_{d,m},
\end{equation}
where $\hat{g}_{d,m}$ denotes the estimate obtained using the original 
grouped income shares $\mathbf{s}_d$, and the standard error is computed 
as
\begin{equation}
SE_{d,m} = \sqrt{\frac{1}{R-1}\sum_{r=1}^{R}\left(\hat{g}^{(r)}_{d,m} - \tilde{g}_{d,m}
\right)^2}, 
\end{equation}
where $\tilde{g}_{d,m}=\frac{1}{R}\sum_{r=1}^{R}\hat{g}^{(r)}_{d,m}$.

\newpage
\section*{Appendix B}

\begin{table}[hp]
\caption{\label{rss}
Godness-of-fit of alternative Lorenz curves: Residual sum of squares}
 \begin{center}
\begin{tabular}{l c c c c c}
\toprule											
	& Ortega 	& Kakwani 	&	 SCS 	&	 $L_3$ 	&	 GQ \\	
\midrule								
Average	&0.0481	&0.0928	&0.0450	&	0.0047	&	0.0079	\\
10\%	&0.0032	&0.0176	&0.0023	&	0.0005	&	0.0011	\\
25\%	&0.0091	&0.0311	&0.0079	&	0.0010	&	0.0021	\\
50\%	&0.0264	&0.0631	&0.0245	&	0.0026	&	0.0041	\\
75\%	&0.0666	&0.1435	&0.0632	&	0.0060	&	0.0089	\\
90\%	&0.1228	&0.2057	&0.1144	&	0.0111	&	0.0180	\\\bottomrule
\end{tabular}
 \end{center}
\begin{tablenotes}
\item \footnotesize{These figures are multiplied by $10^3$ and computed across 1,735 datasets, comprising 1,146 from the PIP database and 589 from the LIS database.}
\item \footnotesize{Source: authors' compilation.}
\end{tablenotes}
\end{table}

\begin{table}[h]
\caption{\label{selection_pip}
Comparison of average empirical poverty and inequality measures between included and excluded datasets in the PIP sample}
\centering
\begin{tabular}{lccccc}
\toprule
Statistic &
Included &
Excluded &
Difference &
95\% CI  \\
\midrule
Mean income	&22.5783	&25.1648	&	-2.5864	&	(-4.5076, -0.6653)	\\
Median income	&18.0349	&20.6452	&	-2.6102	&	(-4.3005, -0.9200)	\\
Gini index	&0.2733	&0.2681	&	0.0052	&	(-0.0086, 0.0190)	\\
MLD	&0.3858	&0.3695	&	0.0163	&	(0.0078, 0.0248)	\\
SPR	&0.2682	&0.2413	&	0.027	&	(0.0130, 0.0409)	\\
Headcount ratio&0.1257	&0.1135	&	0.0123	&	(-0.0049, 0.0294)	\\
Poverty gap	&0.0468	&0.0458	&	0.001	&	(-0.0069, 0.0090)	\\
Poverty severity&0.0245	&0.0258	&	-0.0013	&	(-0.0061, 0.0036)	\\
Watts index	&0.0714	&0.0738	&	-0.0023	&	(-0.0155, 0.0108)	\\
\bottomrule
\end{tabular}

\begin{tablenotes}
\footnotesize
\item Notes: Included datasets correspond to observations retained in the balanced sample, i.e., datasets for which all five Lorenz curve specifications yield genuine estimates. Excluded datasets correspond to observations removed by this restriction. The last two columns report the lower and upper bounds of the 95 percent confidence interval for the difference in means.
\item Source: Authors' calculations.
\end{tablenotes}
\end{table}

\begin{table}[tbp]
\caption{\label{selection_lis}
Comparison of average empirical poverty and inequality measures between included and excluded datasets in the LIS sample}
\centering
\begin{tabular}{lcccc}
\toprule
Statistic &
Included &
Excluded &
Difference &
95\% CI \\
\midrule
Mean income	&39.6545	&36.6071	&3.0474	&(0.3725, 5.7223)		 \\
Median income	&32.6660	&30.6878	&1.9781	&	(-0.3807, 4.3369)		 \\
Gini index	&0.3635	&0.3608	&0.0028	&	(-0.0090, 0.0146)		 \\
MLD (GE(0))	&0.2529	&0.2607	&-0.0079	&	(-0.0280, 0.0123)		 \\
Theil index (GE(1))&0.2513	&0.2722	&-0.0209	&	(-0.0423, 0.0005)		 \\
GE(1.5)	&0.2950	&0.3489	&-0.0539	&	(-0.0904, -0.0174)		 \\
Atkinson ($\varepsilon=1$)	&0.2181	&0.2177	&0.0004	&(-0.0134, 0.0143)		 \\
Atkinson ($\varepsilon=1.5$)	&0.3188	&0.3090	&0.0099	&(-0.0083, 0.0281)		 \\
Societal poverty rate 	&0.1893	 &0.1828	 &0.0064	 &(-0.0067, 0.0196)		 \\
\bottomrule
\end{tabular}

\begin{tablenotes}
\footnotesize
\item Notes: Included datasets correspond to observations retained in the balanced sample, i.e., datasets for which all five Lorenz curve specifications yield genuine estimates. Excluded datasets correspond to observations removed by this restriction. The last column reports the 95 percent confidence interval for the difference in means.
\item Source: Authors' calculations.
\end{tablenotes}
\end{table}

\begin{table}[tbp]
\caption{\label{LIS_app1}
Absolute error in the estimation of the poverty and inequality measures using alternative Lorenz curve specifications using all the datasets available in LIS}
 \begin{center}
\begin{tabular}{l c c c c c}
\toprule											
	&	 Ortega 	&	 Kakwani 	&	 SCS 	&	 $L_3$ 	&	 GQ \\	
\midrule											
\multicolumn{6}{l}{SPR}\\  \midrule											
Average	&	0.0060	&	0.0169	&	0.0060	&	0.0047	&	0.0087	\\
10\%	&	-0.0088	&	-0.0149	&	-0.0093	&	-0.0068	&	-0.0003	\\
50\%	&	0.0012	&	0.0121	&	0.0013	&	0.0004	&	0.0078	\\
90\%	&	0.0102	&	0.0275	&	0.0102	&	0.0079	&	0.0160	\\
\midrule\multicolumn{6}{l}{Gini index}\\  \midrule											
Average	&	0.0013	&	0.0026	&	0.0011	&	0.0008	&	0.0009	\\
10\%	&	-0.0009	&	-0.0040	&	-0.0005	&	-0.0014	&	-0.0013	\\
50\%	&	0.0006	&	-0.0025	&	0.0005	&	-0.0008	&	-0.0007	\\
90\%	&	0.0029	&	0.0010	&	0.0025	&	-0.0002	&	0.0008	\\
\midrule\multicolumn{6}{l}{MLD}\\  \midrule											
Average	&	0.0099	&	0.0197	&	0.0093	&	0.0076	&	0.0107	\\
10\%	&	-0.0106	&	-0.0327	&	-0.0083	&	-0.0149	&	-0.0190	\\
50\%	&	0.0012	&	-0.0179	&	0.0014	&	-0.0047	&	-0.0091	\\
90\%	&	0.0197	&	-0.0078	&	0.0196	&	0.0020	&	-0.0040	\\
\midrule\multicolumn{6}{l}{Theil index}\\  \midrule											
Average	&	0.0177	&	0.0179	&	0.0143	&	0.0112	&	0.0213	\\
10\%	&	-0.0077	&	-0.0183	&	-0.0051	&	-0.0198	&	-0.0246	\\
50\%	&	0.0042	&	-0.0068	&	0.0049	&	-0.0080	&	-0.0088	\\
90\%	&	0.0539	&	0.0391	&	0.0372	&	-0.0012	&	0.0009	\\
\midrule\multicolumn{6}{l}{GE(1.5)}\\  \midrule											
Average	&	0.0734	&	0.0657	&	0.0541	&	0.0315	&	0.0490	\\
10\%	&	-0.0144	&	-0.0265	&	-0.0103	&	-0.0570	&	-0.0592	\\
50\%	&	0.0097	&	-0.0043	&	0.0119	&	-0.0165	&	-0.0167	\\
90\%	&	0.2402	&	0.2169	&	0.1540	&	0.0005	&	0.0035	\\
\midrule\multicolumn{6}{l}{Atkinson index (1)}\\  \midrule											
Average	&	0.0067	&	0.0154	&	0.0062	&	0.0056	&	0.0081	\\
10\%	&	-0.0086	&	-0.0252	&	-0.0069	&	-0.0111	&	-0.0140	\\
50\%	&	0.0010	&	-0.0144	&	0.0012	&	-0.0038	&	-0.0074	\\
90\%	&	0.0125	&	-0.0064	&	0.0124	&	0.0015	&	-0.0033	\\
\midrule\multicolumn{6}{l}{Atkinson index (1.5)}\\  \midrule											
Average	&	0.0149	&	0.0384	&	0.0146	&	0.0141	&	0.0229	\\
10\%	&	-0.0240	&	-0.0652	&	-0.0208	&	-0.0304	&	-0.0430	\\
50\%	&	-0.0011	&	-0.0345	&	-0.0004	&	-0.0078	&	-0.0184	\\
90\%	&	0.0239	&	-0.0153	&	0.0260	&	0.0068	&	-0.0080	\\
\bottomrule
\end{tabular}
 \end{center}
\begin{tablenotes}
\item \footnotesize{These figures are computed across 1,006 LIS datasets.}
\item \footnotesize{Source: authors' compilation.}
\end{tablenotes}
\end{table}

\begin{table}[tbp]
\caption{\label{poverty_app1}
Absolute error in the estimation of inequality and poverty measures using alternative Lorenz curve specifications using all the datasets available in PIP}
 \begin{center}
\begin{tabular}{l c c c c c}
\toprule
	&Ortega	&Kakwani 	     &	SCS	&	L3	&GQ\\
    &       &($\alpha = 1$)&  &       &       \\									
\midrule\multicolumn{6}{l}{Poverty headcount}\\  \midrule				
Average	&	0.0074	&	0.0121	&	0.0073	&	0.0040	&	0.0065	\\
10\%	&	-0.0089	&	-0.0278	&	-0.0086	&	-0.0069	&	-0.0127	\\
50\%	&	0.0005	&	-0.0034	&	0.0005	&	-0.0003	&	-0.0015	\\
90\%	&	0.0139	&	0.0149	&	0.0136	&	0.0054	&	0.0095	\\
\midrule\multicolumn{6}{l}{Poverty gap}\\  \midrule											
Average	&	0.0043	&	0.0055	&	0.0043	&	0.0021	&	0.0029	\\
10\%	&	-0.0037	&	-0.0140	&	-0.0035	&	-0.0041	&	-0.0075	\\
50\%	&	0.0008	&	-0.0025	&	0.0008	&	0.0000	&	-0.0015	\\
90\%	&	0.0098	&	0.0010	&	0.0100	&	0.0030	&	0.0004	\\
\midrule\multicolumn{6}{l}{Poverty severity}\\  \midrule											
Average	&	0.0037	&	0.0043	&	0.0037	&	0.0020	&	0.0026	\\
10\%	&	-0.0029	&	-0.0109	&	-0.0027	&	-0.0038	&	-0.0069	\\
50\%	&	0.0008	&	-0.0019	&	0.0008	&	0.0001	&	-0.0015	\\
90\%	&	0.0080	&	0.0000	&	0.0084	&	0.0026	&	0.0000	\\
\midrule\multicolumn{6}{l}{Watts index}\\  \midrule											
Average	&	0.0108	&	0.0116	&	0.0111	&	0.0055	&	0.0071	\\
10\%	&	-0.0077	&	-0.0275	&	-0.0070	&	-0.0099	&	-0.0186	\\
50\%	&	0.0029	&	-0.0063	&	0.0034	&	0.0004	&	-0.0045	\\
90\%	&	0.0228	&	-0.0001	&	0.0247	&	0.0069	&	-0.0002	\\
\midrule\multicolumn{6}{l}{Gini index}\\  \midrule											
Average	&	0.0019	&	0.0021	&	0.0016	&	0.0007	&	0.0009	\\
10\%	&	-0.0007	&	-0.0037	&	-0.0002	&	-0.0013	&	-0.0010	\\
50\%	&	0.0014	&	-0.0013	&	0.0013	&	-0.0006	&	-0.0003	\\
90\%	&	0.0039	&	0.0019	&	0.0035	&	0.0000	&	0.0016	\\
\midrule\multicolumn{6}{l}{MLD}\\  \midrule											
Average	&	0.0143	&	0.0145	&	0.0137	&	0.0060	&	0.0077	\\
10\%	&	-0.0107	&	-0.0317	&	-0.0092	&	-0.0150	&	-0.0166	\\
50\%	&	0.0067	&	-0.0098	&	0.0068	&	-0.0012	&	-0.0050	\\
90\%	&	0.0284	&	0.0000	&	0.0283	&	0.0040	&	0.0000	\\
\midrule\multicolumn{6}{l}{SPR}\\  \midrule											
Average	&	0.0076	&	0.0176	&	0.0076	&	0.0052	&	0.0076	\\
10\%	&	-0.0145	&	-0.0125	&	-0.0150	&	-0.0088	&	-0.0022	\\
50\%	&	-0.0031	&	0.0153	&	-0.0030	&	-0.0019	&	0.0062	\\
90\%	&	0.0074	&	0.0278	&	0.0072	&	0.0068	&	0.0145	\\

\bottomrule
\end{tabular}
 \end{center}
\begin{tablenotes}
\item \footnotesize{These figures are computed across 1,974 PIP datasets. The corresponding poverty estimates are derived via numerical integration.}
\item \footnotesize{Source: authors' compilation.}
\end{tablenotes}
\end{table}

\begin{figure}[tbhp]
\centering
\includegraphics[width=\columnwidth]{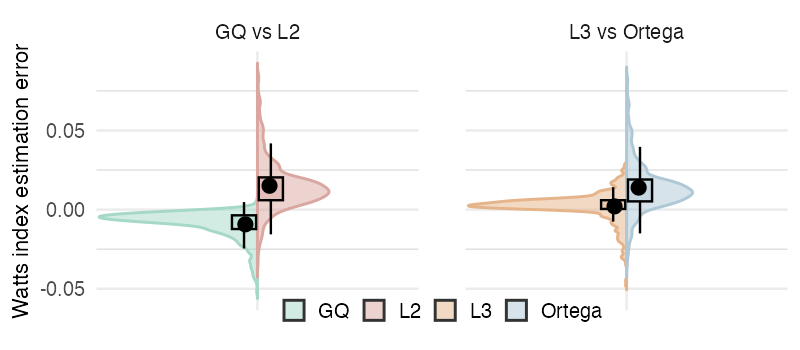}
\caption{Distribution of the error in the estimation of the Watts index}
\label{violin_watts}
\end{figure}

\begin{table}[tbp]
\caption{\label{spr_app1}
Average absolute error in the estimation of the societal poverty rate. LIS datasets}
\centering
\begin{tabular}{lccccc}
\toprule
 & Ortega & Kakwani & SCS & $L_3$ & GQ \\
\midrule
Average	&	0.0060	&	0.0169	&	0.0060	&	0.0047	&	0.0087	\\
10\%	&	-0.0088	&	-0.0149	&	-0.0093	&	-0.0068	&	-0.0003	\\
25\%	&	-0.0031	&	0.0015	&	-0.0030	&	-0.0032	&	0.0037	\\
50\%	&	0.0012	&	0.0121	&	0.0013	&	0.0004	&	0.0078	\\
75\%	&	0.0058	&	0.0209	&	0.0058	&	0.0042	&	0.0117	\\
90\%	&	0.0102	&	0.0275	&	0.0102	&	0.0079	&	0.0160	\\
\bottomrule
\end{tabular}
\begin{tablenotes}
\footnotesize
\item \footnotesize{These figures are computed across 589 datasets from the LIS database. The SPR is computed via numerical integration.}
\item Source: Authors' calculations.
\end{tablenotes}
\end{table}

\begin{table}[tbp]
\caption{\label{spr_app2}
Average absolute error in the estimation of the societal poverty rate. PIP datasets}
\centering
\begin{tabular}{lccccc}
\toprule
 & Ortega & Kakwani & SCS & $L_3$ & GQ \\
\midrule
Average	&	0.0084	&	0.0172	&	0.0086	&	0.0052	&	0.0074	\\
10\%	&	-0.0176	&	-0.0087	&	-0.0183	&	-0.0092	&	-0.0024	\\
25\%	&	-0.0097	&	0.0048	&	-0.0103	&	-0.0058	&	0.0013	\\
50\%	&	-0.0040	&	0.0159	&	-0.0046	&	-0.0018	&	0.0057	\\
75\%	&	0.0027	&	0.0220	&	0.0022	&	0.0029	&	0.0105	\\
90\%	&	0.0073	&	0.0270	&	0.0064	&	0.0067	&	0.0147	\\
\bottomrule
\end{tabular}
\begin{tablenotes}
\footnotesize
\item \footnotesize{These figures are computed across 1,146 datasets from the PIP database. The SPR is computed via numerical integration.}
\item Source: Authors' calculations.
\end{tablenotes}
\end{table}

\begin{table}[tbp]
\caption{\label{poverty_app2}
Absolute error in the estimation of the poverty measures using only PIP datasets with genuine Lorenz curves for the five alternative specifications}
 \begin{center}
\begin{tabular}{l c c c c c}
\toprule
	&Ortega	&	Kakwani ($\epsilon = 1$)	&	SCS	&	L3	&	GQ	\\
\midrule\multicolumn{6}{l}{Poverty headcount}\\  \midrule											
Average	&0.0091	&	0.0131	&	0.0092	&	0.0044	&	0.0069	\\
10\%	&-0.0106	&	-0.0299	&	-0.0110	&	-0.0075	&	-0.0139	\\
25\%	&-0.0043	&	-0.0147	&	-0.0040	&	-0.0042	&	-0.0083	\\
50\%	&0.0027	&	-0.0049	&	0.0027	&	-0.0004	&	-0.0023	\\
75\%	&0.0105	&	0.0011	&	0.0106	&	0.0034	&	0.0021	\\
90\%	&0.0167	&	0.0148	&	0.0169	&	0.0062	&	0.0080	\\
\midrule\multicolumn{6}{l}{Poverty gap}\\  \midrule											
Average	&	0.0087	&	0.0128	&	0.0087	&	0.0043	&	0.0069	\\
10\%	&	-0.0108	&	-0.0292	&	-0.0107	&	-0.0075	&	-0.0141	\\
50\%	&	0.0011	&	-0.0045	&	0.0009	&	-0.0005	&	-0.0023	\\
90\%	&	0.0160	&	0.0148	&	0.0164	&	0.0058	&	0.0084	\\
\midrule\multicolumn{6}{l}{Poverty gap}\\  \midrule											
Average	&	0.0049	&	0.0056	&	0.0050	&	0.0023	&	0.0032	\\
10\%	&	-0.0039	&	-0.0137	&	-0.0036	&	-0.0044	&	-0.0080	\\
50\%	&	0.0022	&	-0.0031	&	0.0024	&	0.0000	&	-0.0019	\\
90\%	&	0.0107	&	0.0014	&	0.0110	&	0.0035	&	0.0005	\\
\midrule\multicolumn{6}{l}{Poverty severity}\\  \midrule											
Average	&	0.0041	&	0.0042	&	0.0042	&	0.0020	&	0.0027	\\
10\%	&	-0.0029	&	-0.0103	&	-0.0028	&	-0.0039	&	-0.0063	\\
50\%	&	0.0019	&	-0.0023	&	0.0020	&	0.0002	&	-0.0018	\\
90\%	&	0.0086	&	0.0000	&	0.0090	&	0.0030	&	-0.0001	\\
\midrule\multicolumn{6}{l}{Watts index}\\  \midrule											
Average	&	0.0118	&	0.0112	&	0.0122	&	0.0057	&	0.0074	\\
10\%	&	-0.0082	&	-0.0253	&	-0.0074	&	-0.0107	&	-0.0172	\\
50\%	&	0.0056	&	-0.0071	&	0.0061	&	0.0011	&	-0.0051	\\
90\%	&	0.0253	&	-0.0002	&	0.0265	&	0.0076	&	-0.0005	\\
\bottomrule
\end{tabular}
 \end{center}
\begin{tablenotes}
\item \footnotesize{These figures are computed across 1,146 PIP datasets. The corresponding poverty estimates are derived via numerical integration.}
\item \footnotesize{Source: authors' compilation.}
\end{tablenotes}
\end{table}

\begin{table}[tbp]
\caption{\label{ineq_app1}
Average absolute error in the estimation of the MLD and the Gini index. LIS datasets}
\centering
\begin{tabular}{lccccc}
\toprule
 & Ortega & Kakwani & SCS & $L_3$ & GQ \\
\midrule\multicolumn{6}{l}{MLD}\\  \midrule						Average	&	0.0099	&	0.0197	&	0.0093	&	0.0076	&	0.0107	\\
10\%	&	-0.0106	&	-0.0327	&	-0.0083	&	-0.0149	&	-0.0190	\\
25\%	&	-0.0045	&	-0.0255	&	-0.0037	&	-0.0091	&	-0.0134	\\
50\%	&	0.0012	&	-0.0179	&	0.0014	&	-0.0047	&	-0.0091	\\
75\%	&	0.0069	&	-0.0118	&	0.0069	&	-0.0021	&	-0.0063	\\
90\%	&	0.0197	&	-0.0078	&	0.0196	&	0.0020	&	-0.0040	\\
\midrule\multicolumn{6}{l}{Gini index}\\  \midrule											
Average	&	0.0013	&	0.0026	&	0.0011	&	0.0008	&	0.0009	\\
10\%	&	-0.0009	&	-0.0040	&	-0.0005	&	-0.0014	&	-0.0013	\\
25\%	&	-0.0002	&	-0.0033	&	-0.0001	&	-0.0010	&	-0.0010	\\
50\%	&	0.0006	&	-0.0025	&	0.0005	&	-0.0008	&	-0.0007	\\
75\%	&	0.0014	&	-0.0014	&	0.0013	&	-0.0005	&	-0.0003	\\
90\%	&	0.0029	&	0.0010	&	0.0025	&	-0.0002	&	0.0008	\\
\bottomrule
\end{tabular}
\begin{tablenotes}
\footnotesize
\item \footnotesize{These figures are computed across 589 datasets from the LIS database. MLD estimates are derived using numerical integration. Gini indices are computed using Eqs. (\ref{Gini_L0}, \ref{Gini_L1}. \ref{Gini_L2}, \ref{Gini_L3}) for the models given in (\ref{SpecialCase1}, \ref{lco}, \ref{lc2}, \ref{lc3}) respectively. For the GQ Lorenz curve, the Gini index is computed by numerical integration.}
\item Source: Authors' calculations.
\end{tablenotes}
\end{table}

\begin{table}[tbp]
\caption{\label{ineq_app2}
Average absolute error in the estimation of the MLD and the Gini index. PIP datasets}
\centering
\begin{tabular}{lccccc}
\toprule
 & Ortega & Kakwani & SCS & $L_3$ & GQ \\
\midrule\multicolumn{6}{l}{MLD}\\  \midrule						Average	&	0.0156	&	0.0137	&	0.0155	&	0.0058	&	0.0084	\\
10\%	&	-0.0099	&	-0.0315	&	-0.0093	&	-0.0152	&	-0.0190	\\
25\%	&	-0.0004	&	-0.0189	&	-0.0001	&	-0.0063	&	-0.0111	\\
50\%	&	0.0104	&	-0.0092	&	0.0104	&	-0.0010	&	-0.0058	\\
75\%	&	0.0199	&	-0.0027	&	0.0199	&	0.0011	&	-0.0035	\\
90\%	&	0.0303	&	0.0015	&	0.0300	&	0.0044	&	-0.0017	\\
\midrule\multicolumn{6}{l}{Gini index}\\  \midrule											
Average	&	0.0021	&	0.0017	&	0.0019	&	0.0007	&	0.0006	\\
10\%	&	-0.0003	&	-0.0034	&	-0.0003	&	-0.0013	&	-0.0009	\\
25\%	&	0.0006	&	-0.0022	&	0.0006	&	-0.0009	&	-0.0007	\\
50\%	&	0.0019	&	-0.0008	&	0.0017	&	-0.0007	&	-0.0003	\\
75\%	&	0.0029	&	0.0004	&	0.0028	&	-0.0004	&	0.0002	\\
90\%	&	0.0040	&	0.0018	&	0.0037	&	-0.0001	&	0.0007	\\
\bottomrule
\end{tabular}
\begin{tablenotes}
\footnotesize
\item \footnotesize{These figures are computed across 1,146 datasets from the PIP database. MLD estimates are derived using numerical integration. Gini indices are computed using Eqs. (\ref{Gini_L0}, \ref{Gini_L1}. \ref{Gini_L2}, \ref{Gini_L3}) for the models given in (\ref{SpecialCase1}, \ref{lco}, \ref{lc2}, \ref{lc3}) respectively. For the GQ Lorenz curve, the Gini index is computed by numerical integration.}
\item Source: Authors' calculations.
\end{tablenotes}
\end{table}

\begin{figure}[tbhp]
\centering
\begin{tabular}{cc}
\includegraphics[scale=0.75]{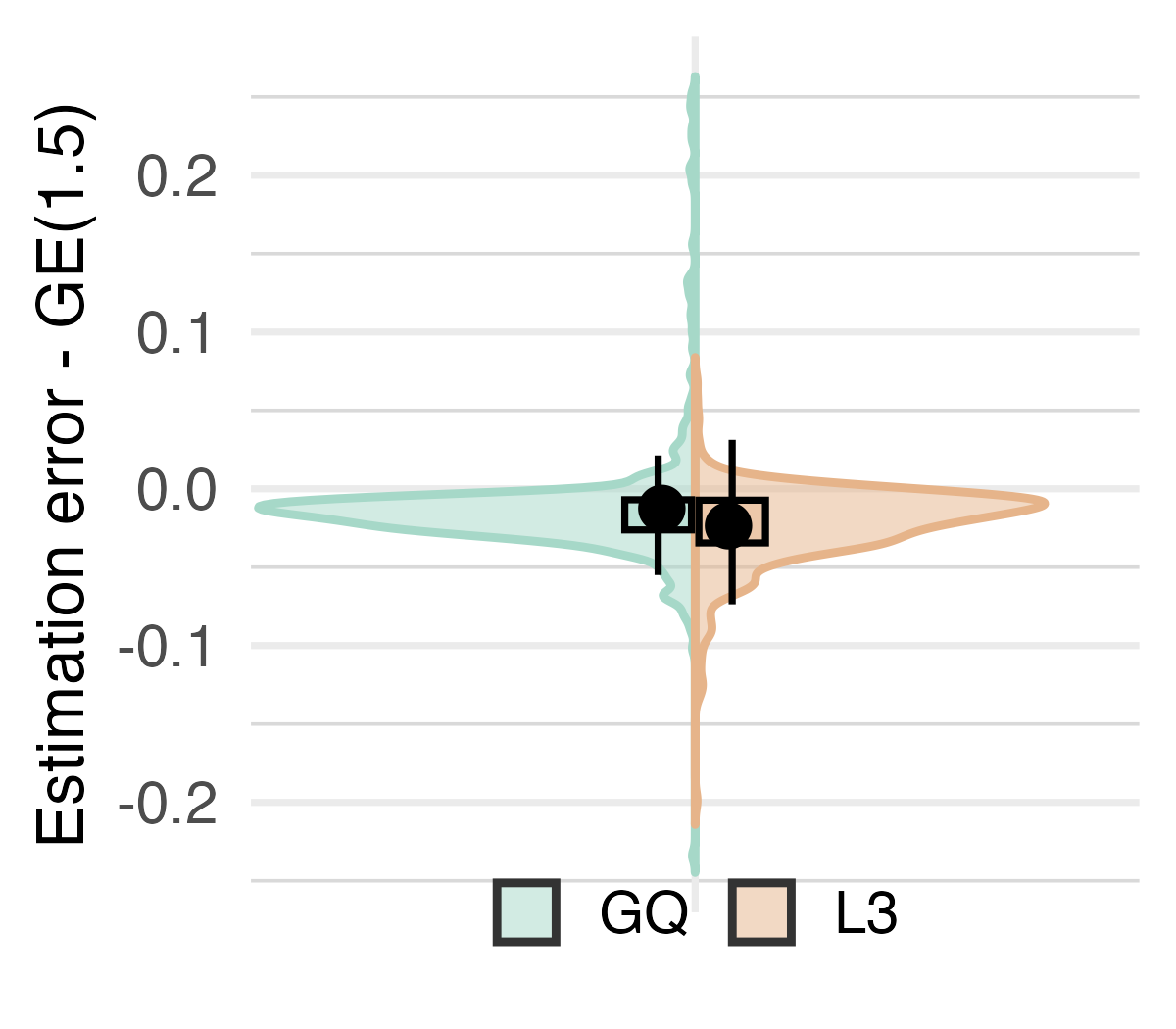} &
\includegraphics[scale=0.75]{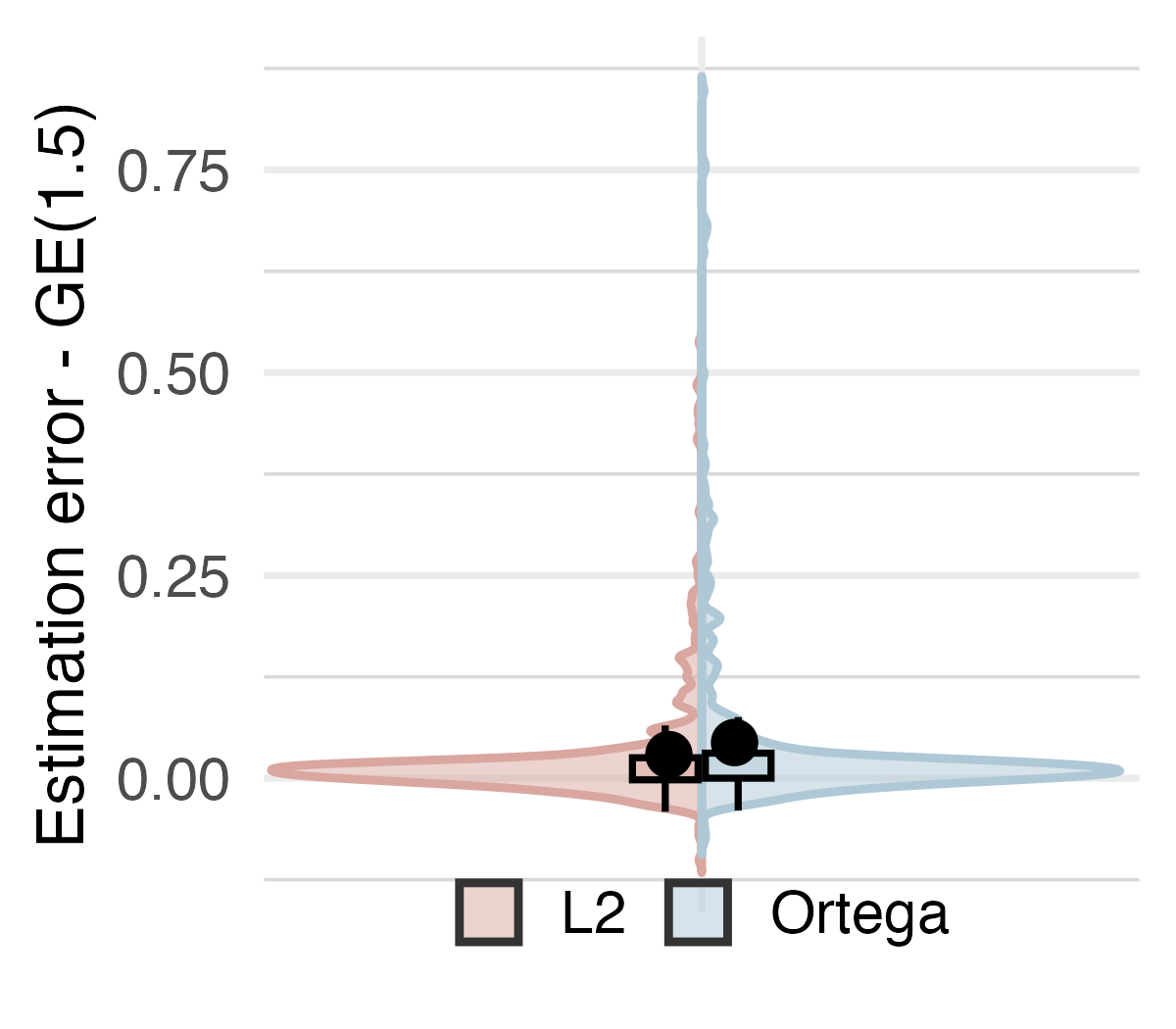}
\end{tabular}
\caption{Distribution of the estimation error for the GE(1.5) measure}
\label{GE15_v}
\end{figure}

\begin{table}[tb]
\caption{\label{bias_models_b}
Average absolute bias in the estimation of poverty and inequality measures using alternative Lorenz curve specifications: LIS and PIP data}
 \begin{center}
\begin{tabular}{l c c c c c}
\toprule
	&Ortega	&Kakwani 	     &SCS	&$L_3$	&GQ\\
    	&       	&($\alpha = 1$)	    &  	&       	&       \\								
\midrule\multicolumn{6}{l}{MLD}\\  \midrule											
$N = 500$	&	0.00094	&	0.00102	&	0.00405	&	0.00192	&	0.00432	\\
$N = 2500$	&	0.00041	&	0.00043	&	0.00205	&	0.00067	&	0.00135	\\
$N = 5000$	&	0.00029	&	0.00032	&	0.00145	&	0.00045	&	0.00083	\\
\midrule\multicolumn{6}{l}{Gini index}\\  \midrule											
$N = 500$	&	0.00164	&	0.00161	&	0.00407	&	0.00166	&	0.00322	\\
$N = 2500$	&	0.00045	&	0.00046	&	0.00193	&	0.00050	&	0.00086	\\
$N = 5000$	&	0.00028	&	0.00029	&	0.00134	&	0.00032	&	0.00051	\\
\midrule\multicolumn{6}{l}{SPR}\\  \midrule											
$N = 500$	&	0.00050	&	0.00082	&	0.00110	&	0.00070	&	0.00060	\\
$N = 2500$	&	0.00019	&	0.00024	&	0.00054	&	0.00027	&	0.00022	\\
$N = 5000$	&	0.00013	&	0.00016	&	0.00041	&	0.00018	&	0.00014	\\
\bottomrule
\end{tabular}
 \end{center}
\begin{tablenotes}
\item \footnotesize{These figures are computed across 1,735 datasets, comprising 1,146 from the PIP database and 589 from the LIS database. Biases are estimated by Monte Carlo simulation of $10^3$ samples of size $N = 500, 2500, 5000.$}
\item \footnotesize{Source: authors' compilation.}
\end{tablenotes}
\end{table}

\begin{table}[tbp]
\caption{\label{bias_models_p}
Average absolute bias in the estimation of poverty measures using alternative Lorenz curve specifications. PIP data}
 \begin{center}
\begin{tabular}{l c c c c c}
\toprule
	&Ortega	&Kakwani 	     &SCS	&$L_3$	&GQ\\
    	&       	&($\alpha = 1$)	    &  	&       	&       \\								
\midrule\multicolumn{6}{l}{Poverty headcount}\\  \midrule											
$N = 500$	&	0.00069	&	0.00077	&	0.00194	&	0.00190	&	0.00217	\\
$N = 2500$	&	0.00027	&	0.00026	&	0.00093	&	0.00066	&	0.00073	\\
$N = 5000$	&	0.00018	&	0.00018	&	0.00065	&	0.00043	&	0.00047	\\
\midrule\multicolumn{6}{l}{Poverty gap}\\  \midrule											
$N = 500$	&	0.00061	&	0.00087	&	0.00111	&	0.00243	&	0.00223	\\
$N = 2500$	&	0.00027	&	0.00034	&	0.00081	&	0.00077	&	0.00074	\\
$N = 5000$	&	0.00020	&	0.00022	&	0.00073	&	0.00049	&	0.00048	\\
\midrule\multicolumn{6}{l}{Poverty severity}\\  \midrule											
$N = 500$	&	0.00283	&	0.00439	&	0.00585	&	0.00352	&	0.00658	\\
$N = 2500$	&	0.00288	&	0.00335	&	0.00238	&	0.00161	&	0.00277	\\
$N = 5000$	&	0.00280	&	0.00321	&	0.00153	&	0.00114	&	0.00188	\\
\midrule\multicolumn{6}{l}{Watts index}\\  \midrule											
$N = 500$	&	0.06544	&	0.06364	&	0.01284	&	0.00816	&	0.01523	\\
$N = 2500$	&	0.05408	&	0.05415	&	0.01117	&	0.00358	&	0.00762	\\
$N = 5000$	&	0.05136	&	0.05249	&	0.01225	&	0.00257	&	0.00544	\\
\bottomrule
\end{tabular}
 \end{center}
\begin{tablenotes}
\item \footnotesize{These figures are computed across 1,146 datasets from the PIP database. Biases are estimated by Monte Carlo simulation of $10^3$ samples of size $N = 500, 2500, 5000.$}
\item \footnotesize{Source: authors' compilation.}
\end{tablenotes}
\end{table}

\begin{table}[tbp]
\caption{\label{bias_models_l}
Average absolute bias in the estimation of inequality measures using alternative Lorenz curve specifications. LIS data}
 \begin{center}
\begin{tabular}{l c c c c c}
\toprule
	&Ortega	&Kakwani 	     &SCS	&$L_3$	&GQ\\
    	&       	&($\alpha = 1$)	    &  	&       	&       \\								
\midrule\multicolumn{6}{l}{Atkinson index (1)}\\  \midrule											
$N = 500$	&	0.00073	&	0.00084	&	0.00208	&	0.00150	&	0.00231	\\
$N = 2500$	&	0.00029	&	0.00027	&	0.00105	&	0.00046	&	0.00084	\\
$N = 5000$	&	0.00020	&	0.00018	&	0.00073	&	0.00028	&	0.00055	\\
\midrule\multicolumn{6}{l}{Atkinson index (1.5)}\\  \midrule											
$N = 500$	&	0.00063	&	0.00093	&	0.00118	&	0.00195	&	0.00229	\\
$N = 2500$	&	0.00026	&	0.00037	&	0.00084	&	0.00057	&	0.00082	\\
$N = 5000$	&	0.00020	&	0.00026	&	0.00074	&	0.00035	&	0.00053	\\
\midrule\multicolumn{6}{l}{Theil index}\\  \midrule											
$N = 500$	&	0.00176	&	0.00224	&	0.00803	&	0.00316	&	0.00759	\\
$N = 2500$	&	0.00075	&	0.00080	&	0.00442	&	0.00121	&	0.00351	\\
$N = 5000$	&	0.00061	&	0.00066	&	0.00318	&	0.00079	&	0.00247	\\
\midrule\multicolumn{6}{l}{GE(1.5)}\\  \midrule											
$N = 500$	&	0.04047	&	0.04738	&	0.01633	&	0.00847	&	0.01877	\\
$N = 2500$	&	0.01620	&	0.02220	&	0.00968	&	0.00299	&	0.01036	\\
$N = 5000$	&	0.01181	&	0.01557	&	0.00795	&	0.00190	&	0.00774	\\
\bottomrule
\end{tabular}
 \end{center}
\begin{tablenotes}
\item \footnotesize{These figures are computed across 589 datasets from the LIS database. Biases are estimated by Monte Carlo simulation of $10^3$ samples of size $N = 500, 2500, 5000.$}
\item \footnotesize{Source: authors' compilation.}
\end{tablenotes}
\end{table}

\end{document}